\newtheorem{definition}{Definition}
\newtheorem{example}{Example}
\newtheorem{theorem}{Theorem}
\begin{document}
\begin{frontmatter} % The preamble begins here.
%
% paper title
% Titles are generally capitalized except for words such as a, an, and, as,
% at, but, by, for, in, nor, of, on, or, the, to and up, which are usually
% not capitalized unless they are the first or last word of the title.
% Linebreaks \\ can be used within to get better formatting as desired.
% Do not put math or special symbols in the title.
\title{OPR-Miner: Order-preserving rule mining for time series}
%
%
% author names and IEEE memberships
% note positions of commas and nonbreaking spaces ( ~ ) LaTeX will not break
% a structure at a ~ so this keeps an author's name from being broken across
% two lines.
% use \thanks{} to gain access to the first footnote area
% a separate \thanks must be used for each paragraph as LaTeX2e's \thanks
% was not built to handle multiple paragraphs
%
%\begin{comment}
	\author[1,2]{Youxi Wu}

	\author[1]{Xiaoqian Zhao}
	
	\author[3]{{Yan Li}\corref{mycorrespondingauthor}}
	\ead{lywuc@163.com}
	
	\author[4]{Lei Guo}
	
	\author[5]{Xingquan Zhu}
	
	\author[6]{Philippe Fournier-Viger}
	
	\author[7]{Xindong Wu}
	\cortext[mycorrespondingauthor]{Corresponding author}
	
	\address[1]{School of Artificial Intelligence, Hebei University of Technology, Tianjin 300401, China}
	
	\address[2]{Hebei Key Laboratory of Big Data Computing, Tianjin {\rm 300401}, China}
	
	\address[3]{School of Economics and Management, Hebei University of Technology, Tianjin 300401, China}
	
	\address[4]{State Key Laboratory of Reliability and Intelligence of Electrical Equipment, Hebei University of Technology, Tianjin 300401, China}
	
	\address[5]{The Department of Computer \& Electrical Engineering and
		Computer Science, Florida Atlantic University, FL 33431, USA} 
	
	\address[6]{Shenzhen University, Shenzhen, China} 
	
	\address[7]{Key Laboratory of Knowledge Engineering with Big Data (the Ministry of Education), Hefei University of Technology, Hefei 230009, China} 

\begin{abstract}
Discovering frequent trends in time series is a critical task in data mining. Recently, order-preserving matching was proposed to find all occurrences of a pattern in a time series, where the pattern is a relative order (regarded as a trend) and an occurrence is a sub-time series whose relative order coincides with the pattern. Inspired by the order-preserving matching, the existing order-preserving pattern (OPP) mining algorithm employs order-preserving matching to calculate the support, which leads to low efficiency. To address this deficiency, this paper proposes an algorithm called efficient frequent OPP miner (EFO-Miner) to find all frequent OPPs. EFO-Miner is composed of four parts: a pattern fusion strategy to generate candidate patterns, a matching process for the results of sub-patterns to calculate the support of super-patterns, a screening strategy to dynamically reduce the size of prefix and suffix arrays, and a pruning strategy to further dynamically prune candidate patterns. Moreover, this paper explores the order-preserving rule (OPR) mining and proposes an algorithm called OPR-Miner to discover strong rules from all frequent OPPs using EFO-Miner. Experimental results verify that OPR-Miner gives better performance than other competitive algorithms. More importantly, clustering and classification experiments further validate that OPR-Miner achieves good performance.

\end{abstract}

% Note that keywords are not normally used for peerreview papers.
\begin{keyword}
pattern mining \sep
rule mining \sep
time series \sep 
order-preserving \sep
frequent trend
\end{keyword}

\end{frontmatter}

% For peer review papers, you can put extra information on the cover
% page as needed:
% \ifCLASSOPTIONpeerreview
% \begin{center} \bfseries EDICS Category: 3-BBND \end{center}
% \fi
%
% For peerreview papers, this IEEEtran command inserts a page break and
% creates the second title. It will be ignored for other modes.
%\IEEEpeerreviewmaketitle

\section{Introduction}
% The very first letter is a 2 line initial drop letter followed
% by the rest of the first word in caps.
% 
% form to use if the first word consists of a single letter:
% \IEEEPARstart{A}{demo} file is ....
% 
% form to use if you need the single drop letter followed by
% normal text (unknown if ever used by the IEEE):
% \IEEEPARstart{A}{}demo file is ....
% 
% Some journals put the first two words in caps:
% \IEEEPARstart{T}{his demo} file is ....
% 
% Here we have the typical use of a "T" for an initial drop letter
% and "HIS" in caps to complete the first word.

A time series is a continuous numerical series  of data or a group of real values that is commonly used in many fields, such as brain EEG clustering \cite{Dai2022tcyb}, stock prediction \cite{li2021tkde}, and weather forecasting \cite{karevan2020}.  Many studies have been investigated. For example, Wu and Keogh \cite{wukeogh} focused on time series anomaly detection. Rezvani et al. \cite{2021tkde} studied a new pattern representation method for time series data to effectively detect the change point. Sequential pattern mining method, as a commonly used method, can also be used to discover patterns of interest to users in time series \cite{wu2021tmis} after  discretizing the time series into symbols.  {Note that although in episode mining, an event sequence has a set of consecutive time stamps \cite{kdd1996, philippe2019}, it is far different from time series, since an event sequence is a group of discrete events, while time series is a group of continuous numerical values. Therefore, users can directly apply the episode mining methods on event sequences \cite{aotkde2017,icdm2021}, while users have to adopt some discretization methods at first, and then apply some sequential pattern mining methods on time sequence.}

However, the existing discretizing methods pay too much attention to the values, such as piecewise linear approximation (PAA) {\cite{PAA}} and symbolic aggregate approximation (SAX) {\cite{SAX}}. Therefore, it is difficult to discover the frequent trends using sequential pattern mining methods. To address this deficiency, several methods have been investigated to find subsequences with the same trend, such as (delta, gamma) approximate matching {\cite{li2021apind, wu2020netd}}, weak gap strong pattern mining {\cite{wu2021insweak}}, and tri-way pattern mining \cite{Min2020ins, wu2021ntpm}. These methods need to set the parameters manually, which may cause the loss of important information in the process and destroy the continuity of the time series.

Recently, order-preserving matching \cite{kim2014orde,cho2013fast} (or called consecutive permutation pattern matching \cite{ipl2013}) has been proposed, which does not need to discretize real numbers into symbols. Order preserving matching can find all occurrences of a pattern in a time series, where the pattern is a relative order (regarded as a trend) and an occurrence is a sub-time series whose relative order coincides with the pattern. Inspired by order-preserving  matching, our previous work proposed the order-preserving pattern mining (OPP-Miner) algorithm {\cite{wu2021orde}}, which used the relative order of real values to express a pattern called an order-preserving pattern (OPP). By mining OPPs, we can find frequent trends in a time series.  An illustrative example is shown in Fig. \ref{fig:aa}.  In the figure, regions A, B, and C have different means and variances, and the means of A, B, and C are 30.25, 23.25, and 24.75, respectively. The variances of A, B, and C are 4.69, 2.19, and 3.69, respectively. Finding patterns from such non-stationary data is challenging, because of the changing mean and variance. On the other hand, patterns may continually repeat themselves but with different mean values.  For example, over years, the stock index has increased many times (showing an increased mean value), whereas the market patterns are rather similar. By observing order of patterns within a local region, OPP mining can find repetitive patterns with different mean values.

\begin{figure}
	\centering
	\includegraphics[width=0.9\linewidth]{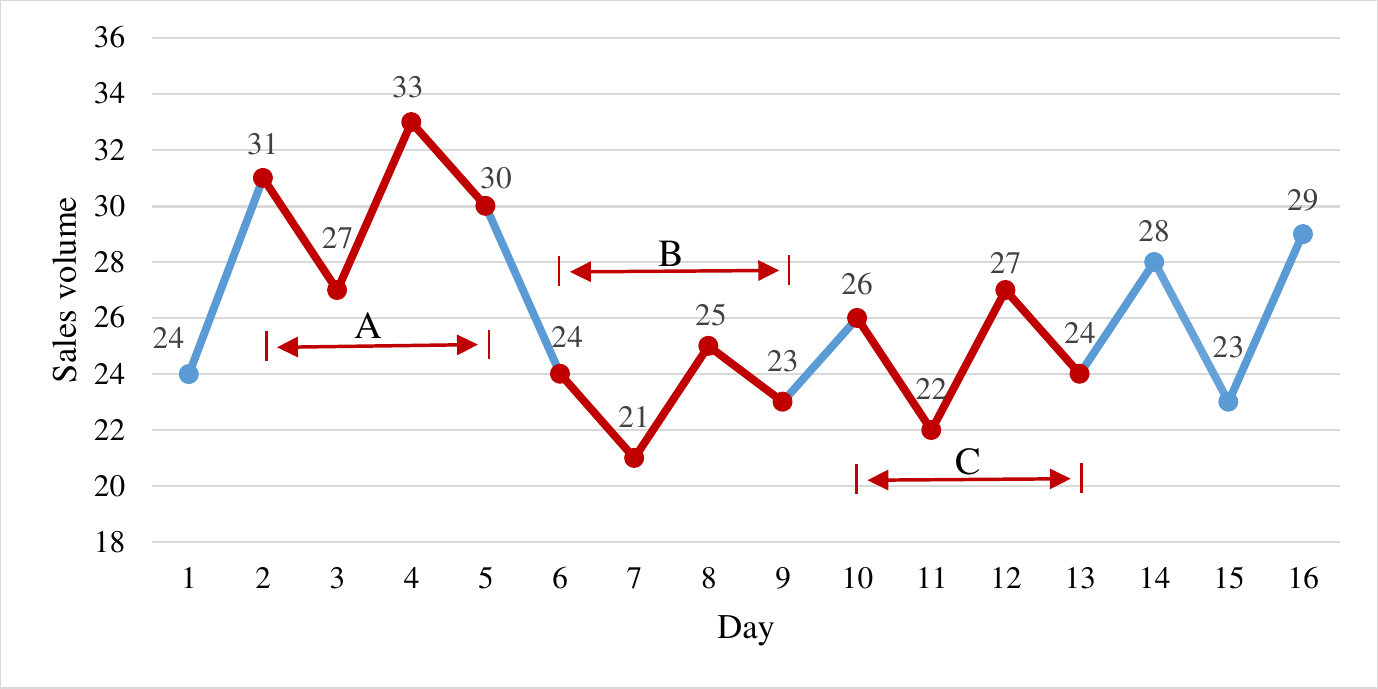}
	\caption{Sales volume of goods over 16 days. The relative order of sub-time series $(t_2,t_3,t_4,t_5)$=(31,27,33,30) is (3,1,4,2), since 31 is the third smallest, 27 is the smallest,  and so on. It can be seen that the trends in the sub-time series marked in red are exactly the same. If $minsup$ = 3, then (3,1,4,2) is a frequent OPP, and OPP mining can discover similar frequently occurring trends.  }
	\label{fig:aa}
\end{figure}

However, there are two problems with OPP mining: (i) OPP-Miner {\cite{wu2021orde}} adopts a pattern matching method to calculate pattern support. Although the space complexity of OPP-Miner is low, its efficiency is also low, since it does not use the calculation results of the sub-patterns. Hence, the efficiency of OPP-Miner needs to be improved. (ii) More importantly, although all OPPs can be discovered, how to further apply these mining patterns has not been deeply explored.

To improve the performance of OPP-Miner, we propose an algorithm called efficient frequent order-preserving pattern miner (EFO-Miner). Moreover, to  utilize these OPPs effectively, we develop order-preserving rule (OPR) mining and propose the OPR-Miner algorithm, which can mine the implicit relationships between OPPs. The main contributions of the paper are as follows.

1) To efficiently mine frequent OPPs, we propose an EFO-Miner algorithm, which employs four strategies: pattern fusion, support-based pattern fusion, screening, and pruning.

2) To mine the implicit relationships between OPPs, we  propose the OPR-Miner algorithm based on EFO-Miner to discover strong rules.

3) Experimental results verify that OPR-Miner yields better performance than other competitive algorithms. Moreover, clustering and classification experiments validate that OPR-Miner can be used to realize feature extraction and achieve good performance.

The rest of this paper is organized as follows. Section \ref {section2} introduces related work. Section  \ref {section3} provides a definition of the problem. Section  \ref {section4} proposes the OPR-Miner algorithm and presents an analysis of its time and space complexities. In Section  \ref {section5}, we validate the performance of OPR-Miner. Section  \ref {section6} concludes this paper.

\section{RELATED WORK} \label {section2}

Sequential pattern mining  {\cite{fournierviger2014spmf}} is an important topic in the field of data mining, whose aim is to mine the subsequences  from a sequential dataset that users are interested in and to help people understand the data and make decisions by analyzing the potential patterns \cite {Okolica2020tkde}. {To solve different types of problems, sequential pattern mining has been extended to include a variety of mining methods, such as sequential pattern mining with gap constraints (or repetitive sequential pattern mining) \cite {wuapin2014},  negative sequential pattern mining \cite{dong2019mini, wunegativetkdd}, high utility pattern mining \cite{gan2021tkde}, high average-utility pattern mining \cite {weisong2021,Truong2019},  episode mining \cite{ Mannila1997, aoicde2015}, and OPP mining for time series \cite {wu2021orde}. }

Variours sequential pattern mining methods have been applied in many fields, such as  disease prediction {\cite{ghosh2017sept}}, virus sequence analysis {\cite{li2021apinm}}, and network clickstream analysis {\cite{nish2018alat}}. For example,  Duan et al. {\cite{duan2020tkdd}} used outlying sequence pattern mining to analyze the outliers in sequence data. {Wu et al. {\cite{wu2019mini}} developed top-$k$ contrast pattern mining to realize the feature extraction of sequence classification.} Smedt et al. \cite{smedt2020tkde} discovered patterns for sequence classification using behavioral constraint templates.   Wu et al. {\cite{wu2018nose}} used a Nettree to calculate the support of a pattern under nonoverlapping conditions. Zhang et al. {\cite{zhang2017onap}} proposed a sequential pattern mining method based on periodic gap constraints. 
% and applied it to DNA sequence mining

However, frequent pattern mining may ignore the implicit relationships within the transaction, and sequential rule mining {\cite{fournierviger2012cmru}} was proposed to address this problem. For example, Pham et al. {\cite{pham2014anef}} proposed an efficient method of mining sequential rules by constructing a prefix tree structure, which generated a large number of redundant rules in the process.   Moreover, Fournier-Viger et al. {\cite{fournierviger2015tkde}} proposed a partially-ordered sequential rule mining to improve prediction accuracy.

Although the works described above have achieved good mining results, these studies mainly focused on the mining of discrete sequences, such as DNA or protein sequences. Due to the high continuity of time series, it is difficult to apply this approach to time series composed of ordered and continuous values. A classical way is that users employ the symbolization methods to discretize the original real values into symbols, and then apply the sequential pattern mining methods to find the interesting patterns. Typical symbolization algorithms include segmentation notation, represented by PAA {\cite{PAA}}, and symbolic representation, represented by SAX {\cite{SAX}}. The main advantage of the time series symbolization method is that the time series is converted into a sequence of symbols through certain transformation rules, thus allowing traditional symbol sequence mining methods to be applied. However, various kinds of noise are inevitably introduced, due to the setting of various hard intervals in the process of converting time series into symbol series. In addition, these methods also ignore the original characteristics of the sequence, making it difficult to find the trends in the data.

To overcome the drawbacks of the symbolization methods, our previous work proposed the OPP mining method which does not need to symbolize the time series \cite{wu2021orde}. To effectively discover the frequent OPPs, OPP-Miner was proposed and employed an OPP matching method to calculate the supports.  In terms of OPP matching, Kim et al. {\cite{kim2014orde}} employed the KMP algorithm to find subsequences with the same trend in a sequence. However, their approach did not consider the case of equal values, and Cho et al. {\cite{cho2013fast}} therefore designed a new algorithm to determine whether two time series were in the same order, even if some elements were equal. To further improve the matching efficiency, Chhabra and Tarhio \cite{Chhabra2016ipl} proposed a filtration method to find all  occurrences.

{However, OPP-Miner \cite{wu2021orde} has two drawbacks. Firstly, the efficiency of OPP-Miner can be further improved, since OPP-Miner adopts a pattern matching method to calculate pattern support, which does not use the calculation results of the sub-patterns. Secondly, OPP-Miner discovers all OPPs. Nevertheless, the implicit relationships between OPPs are not discovered. To overcome the drawbacks of OPP-Miner, this paper proposes the EFO-Miner algorithm, which utilizes the results from sub-patterns to calculate the support of super-patterns, in order to effectively avoid redundant calculations and improve the mining efficiency. More importantly, this paper further proposes the OPR-Miner algorithm based on the EFO-Miner algorithm to find strong rules which can discover the implicit relationships between OPPs, and can be used to extract time series features for clustering and classification.}

\section{Problem Definition} \label {section3}

\begin{definition}\label{definition1}
	A time series is a numerical series of the same statistical indicator that is arranged in the order of its occurrence time, and is denoted as \textbf{t} = ($t_1,\ldots,t_i,\ldots,t_n$), where 1$\leqslant i\leqslant n$.
\end{definition}

\begin{definition}\label{definition2}
	The rank of an element $p_{i}$ in pattern \textbf{p} = $(p_{1},\ldots,p_{i},\ldots,p_{m})$ (1$\leqslant i\leqslant m$) is denoted as $rank_{\textbf{p}}(p_{i}).$ A pattern represented by the relative order of the elements is called an OPP, and can be expressed as $R(\textbf{p}) = (rank_{\textbf{p}}(p_{1}),rank_{\textbf{p}}(p_{2}),\ldots,rank_{\textbf{p}}(p_{m}))$.
\end{definition}

\begin{example}
	Suppose we have a pattern \textbf{p} = (31,27,33,30). We know that 31 is the third smallest value in \textbf{p}, i.e., \textit{rank}(31) = 3. Similarly, \textit{rank}(27) = 1. Thus, the OPP of \textbf{p} is  \textit{R}(\textbf{p}) = (3,1,4,2).
\end{example}

\begin{definition}\label{definition3}
	{Suppose we have a pattern $\textbf{p} = (p_{1},p_{2},\ldots,p_{m})$ and a time series \textbf{t} = ($t_1,\ldots,t_i,\ldots,t_n$). If there exists a sub-time series  \textbf{t$'$ }= $(t_{i},t_{i+1},\ldots,t_{i+m-1})$ (1$\leqslant i $ and $i+m$-$1\leqslant n)$ which satisfies \textit{R}(\textbf{t$'$}) = \textit{R}(\textbf{p}), then \textbf{t$'$} is an occurrence of pattern \textbf{p} in time series \textbf{t}, and we use $<$\textit{i}+\textit{m}$-$1$>$   to represent the occurrence. The support of \textbf{p} in \textbf{t} is the number of occurrences, denoted by \textit{sup}(\textbf{p}, \textbf{t}).}
\end{definition}

\begin{definition}\label{definition4}
	Given a minimum support threshold \textit{minsup}, if the support of \textbf{ p} in \textbf{t} is no less than \textit{minsup}, i.e., \textit{sup}(\textbf{p}, \textbf{t}) $\geqslant$ \textit{minsup}, then pattern \textbf{p} is called a frequent OPP.
\end{definition}

\begin{example}\label{example2}
	Suppose we have a sequence \textbf{t} = (24,31,27,33,30,24,21,25,23,26,22,27,24,28,23,29), as shown in Fig. \ref{fig:aa}, and a sub-time series $(t_{2},t_{3},t_{4},t_{5})$ = (31,27,33,30). We know that \textbf{p} = $R(t_{2},t_{3},t_{4},t_{5})$ = (3,1,4,2). Similarly, $R(t_{6},t_{7},t_{8},t_{9})$ = $R(t_{10},t_{11},t_{12},t_{13}) $= (3,1,4,2). There are therefore three occurrences of pattern (3,1,4,2) in t, i.e., \textit{sup}(\textbf{p}, \textbf{t}) = 3. If\textit{ minsup} = 3, then pattern \textbf{p} is a frequent OPP. In this way, we can get all frequent OPPs \textit{F} = \{(1,2), (2,1), (1,3,2), (2,1,3), (1,3,2,4), (3,1,4,2)\}.
\end{example}

\begin{definition}\label{definition5}
	Given a pattern \textbf{p} = $(p_{1},p_{2},\ldots,p_{m})$, the sub-time series \textbf{e} =$ R(p_{1},p_{2},\ldots,p_{m-1})$ is called the prefix OPP of \textbf{p}, and is denoted as \textbf{e} = \textit{prefix}(\textbf{p}). Sub-time series \textbf{k} = $\textit{R}(p_{2},p_{3},\ldots,p_{m})$ is called the suffix OPP of \textbf{p}, and is denoted as \textbf{k} = \textit{suffix}(\textbf{p}), where \textbf{e} and \textbf{k} are the order-preserving sub-patterns of \textbf{p}, and \textbf{p} is the order-preserving super-pattern of \textbf{e} and \textbf{k}.
\end{definition}

\begin{definition}\label{definition6}
	Suppose \textbf{x} and \textbf{y} are frequent OPPs. If\textbf{ x} is the prefix OPP of \textbf{y}, then the implication \textbf{x}$\to$\textbf{y} is called an order-preserving rule, where \textbf{x} is the antecedent of the rule, and \textbf{y} is the consequent of the rule.
\end{definition}

\begin{definition}\label{definition7}
	The confidence rate of \textbf{x}$\to$\textbf{y}, denoted as \textit{conf}(\textbf{x}$\to$\textbf{y}), is the ratio of the support of\textbf{ y} to that of \textbf{x}, i.e., \textit{conf}(\textbf{x}$\to$\textbf{y}) = \textit{sup}(\textbf{y}, \textbf{t})/\textit{sup}(\textbf{x}, \textbf{t}).
\end{definition}

\begin{definition}	\label{definition8}
	Given a minimum confidence rate threshold \textit{minconf}, if \textit{conf}(\textbf{x}$\to$\textbf{y}) $\geqslant$\textit{ minconf}, then  \textbf{x}$\to$\textbf{y} is called a strong OPR.
\end{definition}

\begin{definition}\label{definition9}	
	Our aim is to discover all strong OPRs in frequent OPPs according to \textit{minconf}.
\end{definition}

\begin{example}\label{example3}
	In Example 2, (2,1,3) is the prefix OPP of (3,1,4,2). Both (2,1,3) and (3,1,4,2) are frequent OPPs, and their supports are 4 and 3, respectively. Hence, \textit{conf}((2,1,3)$\to$(3,1,4,2)) = 3/4 = 0.75. If \textit{minconf} = 0.7, then (2,1,3)$\to$(3,1,4,2) is a strong OPR. Since \textit{conf}((2,1)$\to$(2,1,3)) = 4/8 = 0.5, which is less than \textit{minconf}, it is not a strong OPR. The strong OPRs in Example 2 are \textit{R} = \{(1,2)$\to$(1,3,2), (2,1,3)$\to$(3,1,4,2)\}.
\end{example}

\section{Proposed algorithms} \label {section4}

In OPR mining, the key issue is to discover frequent OPPs. In Section \ref{sub4.1}, we introduce the principle of pattern fusion to generate candidate patterns. We propose the methods of support calculation based on pattern fusion (SPF) and SPF-Pro in  Sections \ref{sub4.2} and \ref{sub4.22}, respectively. Section \ref{sub4.3} illustrates the pruning strategy that is applied to further prune candidate patterns based on SPF-Pro. Section \ref{sub4.4} presents EFO-Miner, which is used to mine frequent OPPs. Finally, Section \ref{sub4.5} proposes OPR-Miner to discover strong rules.

\subsection{Generating candidate patterns }\label {sub4.1}
To reduce the number of candidate patterns, we adopt a pattern fusion method proposed in \cite{wu2021orde} to generate candidate patterns.

For \textbf{p} = $(p_{1},p_{2},\ldots,p_{m})$ and \textbf{q} = $(q_{1},q_{2},\ldots,q_{m})$, where \textit{m} is the length of the patterns, if \textit{R}(\textit{suffix}(\textbf{p})) = \textit{R}(\textit{prefix}(\textbf{q})), then \textbf{p} and \textbf{q} can generate a super-pattern with length \textit{m}+1.  Two cases are given below:

\textbf{Case 1}: If $p_{1} \neq q_{m}$, then \textbf{p} and \textbf{q} can generate one pattern \textbf{r} = $(r_{1},r_{2},\ldots,r_{m+1})$, denoted as \textbf{r} = \textbf{p} $\bigoplus$ \textbf{q}.

1. If $p_{1} < q_{m}$, then $r_{1} = p_{1}$. Moreover, if $q_{i} < p_{1}$, then $r_{i+1} = q_{i}$. Otherwise, $r_{i+1} = q_{i}+1 (1< i \leqslant m)$.

2. If $p_{1} > q_{m}$, then $r_{1} = p_{1}+1$. Moreover, if $q_{i} < p_{1}$, then $r_{i+1} = q_{i}$. Otherwise, $r_{i+1} = q_{i}+1 (1< i \leqslant m)$.

\textbf{Case 2}: If $p_{1} = q_{m}$, then \textbf{p} and \textbf{q} can generate two patterns \textbf{r} = $(r_{1},r_{2},\ldots,r_{m+1})$ and \textbf{h} = $(h_{1},h_{2},\ldots,h_{m+1})$ , denoted as \textbf{r}, \textbf{h} = \textbf{p} $\bigoplus$ \textbf{q}. For pattern \textbf{r}, $r_{1} = p_{1}$ and $r_{m+1} = p_{1}+1$. Moreover, if $q_{i} < p_{1}$, then $r_{i+1} = q_{i}$. If $q_{i} > p_{1}$, then $r_{i+1} = q_{i}+1(1< i < m $). 

For pattern \textbf{h}, $h_{1} = p_{1}+1$ and $h_{m+1} = p_{1}$. Moreover, if $q_{i} < p_{1}$, then $h_{i+1} = q_{i}$. If $q_{i} > p_{1}$, then $h_{i+1} = q_{i}+1(1< i < m $).

Example \ref {example4} illustrates the principle.

\begin{example}\label{example4}
	{Suppose there are only two frequent patterns with length-3, \textbf{p} = (2,1,3) and \textbf{q} = (1,3,2). Based on the two patterns, we show that different strategies can generate different number of candidate patterns with length-4. Table \ref{tab2} shows the sets of candidate patterns generated by enumeration and pattern fusion. If the enumeration method is adopted, there are four cases for each pattern, i.e., we can insert 1,2,3, or 4 at the end, while maintaining the relative order of the pattern (2,1,3). Thus, we get four candidate patterns (3,2,4,1), (3,1,4,2), (2,1,4,3), and (2,1,3,4), respectively. There are therefore eight candidate patterns using the enumeration strategy, since there are two length-3 patterns. }
	
	However, there are three candidate patterns using the pattern fusion strategy. We take (2,1,3)$\bigoplus$(1,3,2) as an example. Since $p_{1} = q_{3}$ = 2, according to Case 2, pattern fusion generates two candidate patterns, \textbf{r} and \textbf{h}. For pattern \textbf{r}, $r_{1} = p_{1}$ = 2 and $r_{4} = p_{1}+1$ = 3. Since $q_{1} = 1 < 2$, $r_{2} = q_{1}=1, $ and since $q_{2} = 3 > 2$, $r_{3} = q_{2}+1 = 4$. Hence, pattern \textbf{r} is (2,1,4,3). Similarly, pattern \textbf{h} is (3,1,4,2). Table \ref{tab2} shows a comparison of candidate patterns for these two different strategies. 
	
\begin{table}[!htb]
	\footnotesize
	\centering
	\caption{Comparison of candidate patterns }
	\label{tab2}
	\tabcolsep 4pt %space between two columns. ÓÃÓÚµ÷ÕûÁÐ¼ä¾à
	\begin{tabular}{cccc}
		%\toprule
		\hline\noalign{\smallskip}
		Frequent pattern & Enumeration  & Patterns & Pattern fusion  \\\hline
		(2,1,3) & \makecell*[c]{(3,2,4,1),(3,1,4,2)\\(2,1,4,3),(2,1,3,4)} & (2,1,3)$\bigoplus$(1,3,2) & (2,1,4,3),(3,1,4,2) \\
		(1,3,2) & \makecell*[c]{(2,4,3,1),(1,4,3,2)\\(1,4,2,3),(1,3,2,4)} & (1,3,2)$\bigoplus$(2,1,3) & (1,3,2,4) \\
		%\bottomrule
		\noalign{\smallskip}\hline
	\end{tabular}
\end{table}
	
\end{example}

From Table \ref{tab2}, we can see that the pattern fusion strategy outperforms the enumeration strategy, since the pattern fusion strategy can prune many useless candidate patterns, thus improving the mining efficiency.

{Although the pattern fusion strategy was proposed in \cite{wu2021orde}, the correctness and completeness were not given in that paper. Now, we show the correctness and completeness as follows.}

{ \begin{theorem}\label {theorem0}
		Each candidate pattern is generated exact once and all candidate patterns can be generated, i.e., the pattern fusion strategy is correct and complete.
\end{theorem}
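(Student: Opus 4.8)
The plan is to show that the fusion operator $\bigoplus$, applied to the compatible pairs of frequent length-$m$ OPPs, surjects onto the set of length-$(m+1)$ candidates and reaches each one with multiplicity exactly one; then both ``exactly once'' and ``all candidates'' follow at the same time. Call a length-$(m+1)$ OPP $\textbf{s}=(s_1,\dots,s_{m+1})$ a \emph{candidate} if both $\textit{prefix}(\textbf{s})$ and $\textit{suffix}(\textbf{s})$ are frequent; since every occurrence of $\textbf{s}$ induces an occurrence of each of these two sub-patterns with distinct ending positions, anti-monotonicity shows that these are exactly the length-$(m+1)$ patterns that can still be frequent. The structural fact I would establish first is that such an $\textbf{s}$ is determined by the triple $(\textbf{p},\textbf{q},\sigma)$, where $\textbf{p}=\textit{prefix}(\textbf{s})$, $\textbf{q}=\textit{suffix}(\textbf{s})$ and $\sigma$ is the sign of $s_1-s_{m+1}$, and that $R(\textit{suffix}(\textbf{p}))=R(\textit{prefix}(\textbf{q}))$ always holds because both sides equal the relative order of the middle block $(s_2,\dots,s_m)$. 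Viewing the $m-1$ middle elements as creating $m$ ``gaps'' (gap $k$ meaning that exactly $k$ middle elements are smaller than the inserted value), the pattern $\textbf{p}$ places $s_1$ in gap $p_1-1$ and $\textbf{q}$ places $s_{m+1}$ in gap $q_m-1$; consequently, for each compatible pair $(\textbf{p},\textbf{q})$ there is exactly one such OPP when $p_1\neq q_m$ (the order of $s_1$ and $s_{m+1}$ is forced) and exactly two when $p_1=q_m$ (both orders are realizable).

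Next I would verify that $\textbf{p}\bigoplus\textbf{q}$ reproduces precisely this family, by matching the explicit rules of Case~1 and Case~2 to the gap-insertion picture. The rank of a middle element $s_i$ ($2\le i\le m$) in $\textbf{s}$ equals its rank $c_{i-1}$ within $(s_2,\dots,s_m)$ plus one for each of $s_1,s_{m+1}$ that is smaller than $s_i$; the rank of $s_1$ in $\textbf{s}$ is $p_1$, increased by one if $s_{m+1}<s_1$; and the rank of $s_{m+1}$ in $\textbf{s}$ is $q_m$, increased by one if $s_1<s_{m+1}$. One then checks that these values coincide with those produced by the $\bigoplus$ rule, which adjusts each entry of $\textbf{q}$ by $1$ according to its comparison with the pivot $p_1$. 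The case split is what makes the pivot correct: $p_1<q_m$ is equivalent to ``some middle element lies strictly between $s_1$ and $s_{m+1}$, with $s_1<s_{m+1}$'' (and symmetrically for $p_1>q_m$), so in Case~1 the sign of $s_1-s_{m+1}$ is already known, which rules out the one sub-case in which the pivoting would otherwise misfire; and $p_1=q_m$ is equivalent to ``no middle element lies strictly between $s_1$ and $s_{m+1}$'', so both signs occur and the two outputs $\textbf{r}$ (with $r_1<r_{m+1}$) and $\textbf{h}$ (with $h_1>h_{m+1}$) are exactly the two OPPs having prefix $\textbf{p}$ and suffix $\textbf{q}$. In particular, every output of a fusion has its prefix and suffix equal to the two operands, hence is a genuine candidate.

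Granting these two steps, the theorem is immediate. For completeness, given any candidate $\textbf{s}$, the pair $\textbf{p}=\textit{prefix}(\textbf{s})$, $\textbf{q}=\textit{suffix}(\textbf{s})$ consists of frequent length-$m$ OPPs satisfying $R(\textit{suffix}(\textbf{p}))=R(\textit{prefix}(\textbf{q}))$, so the algorithm performs $\textbf{p}\bigoplus\textbf{q}$, and by the second step $\textbf{s}$ is one of its outputs. For ``exactly once'', if $\textbf{s}$ is an output of $\textbf{p}\bigoplus\textbf{q}$ then necessarily $\textbf{p}=\textit{prefix}(\textbf{s})$ and $\textbf{q}=\textit{suffix}(\textbf{s})$, so no two distinct fusions can yield $\textbf{s}$; and within that single fusion, Case~1 emits one pattern while the two patterns emitted in Case~2 differ in the sign of the first-minus-last entry, so $\textbf{s}$ is emitted with multiplicity one. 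Hence every candidate is generated, and each exactly once.

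The step I expect to be the main obstacle is the rank-shift bookkeeping in the second paragraph: one must reconcile the two \emph{local} rankings, of $(s_1,\dots,s_m)$ and of $(s_2,\dots,s_{m+1})$, into the single \emph{global} ranking of $(s_1,\dots,s_{m+1})$, and confirm case by case that the pivot used in the definition is the correct one --- the non-obvious point being that the comparison of $p_1$ with $q_m$ silently carries exactly the comparison between $s_1$ and $s_{m+1}$ that the argument requires. A minor side issue is the handling of equal values: if OPPs are permitted to contain ties, one fixes the same position-based tie-breaking convention used for order-preserving matching, after which the counting argument goes through verbatim.
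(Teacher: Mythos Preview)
Your proposal is correct and follows the same high-level skeleton as the paper: anti-monotonicity justifies restricting attention to super-patterns whose prefix and suffix are frequent; uniqueness of the prefix and suffix of $\textbf{s}$ gives ``exactly once''; and the compatibility condition $R(\textit{suffix}(\textbf{p}))=R(\textit{prefix}(\textbf{q}))$ being automatically satisfied gives completeness.

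Where you differ is in rigor. The paper's argument for completeness is essentially the one-line assertion ``we can generate super-pattern $\textbf{r}=\textbf{p}\bigoplus\textbf{q}$ according to the pattern fusion strategy,'' and its argument for uniqueness only rules out two distinct \emph{prefix} patterns producing the same $\textbf{r}$, without checking that the two outputs in Case~2 are distinct. You actually carry out both verifications: the gap-counting observation that $p_1-1$ and $q_m-1$ count middle elements below $s_1$ and $s_{m+1}$ respectively, so that the trichotomy $p_1\lessgtr q_m$ encodes exactly the sign of $s_1-s_{m+1}$, and the rank-shift bookkeeping showing that the explicit Case~1/Case~2 formulas reproduce the global ranks of $(s_1,\dots,s_{m+1})$. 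This is genuinely additional content; the paper takes the correctness of the $\bigoplus$ formulas for granted, while you justify it. Your remark that the two Case~2 outputs differ in the sign of $r_1-r_{m+1}$ also closes a small gap the paper leaves open. The trade-off is length: the paper's proof is a short sketch, yours is a full verification.
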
}
\begin{proof}
	{Firstly, we show that the OPP mining satisfies the anti-monotonicity, which means that support of super-pattern \textbf{r} is less than that of its prefix pattern \textbf{p} or suffix pattern \textbf{q}. Suppose $<$$a$$>$ is an occurrence of super-pattern \textbf{r}. We can safely say that $<$$a-1$$>$ is an occurrence of pattern \textbf{p}, and $<$$a$$>$ is an occurrence of pattern \textbf{q}. Therefore, \textit{sup}(\textbf{r}, \textbf{t})$\leq$\textit{sup}(\textbf{p}, \textbf{t}) and  \textit{sup}(\textbf{r}, \textbf{t})$\leq$\textit{sup}(\textbf{q}, \textbf{t}). Hence, the OPP mining satisfies the anti-monotonicity. }

	{Secondly, we show that each candidate pattern can be generated only once. Proof by contradiction. Suppose  super-pattern \textbf{r} can be generated twice, and suppose  \textbf{r} is generated by two different prefix patterns. Suppose \textbf{r} = $(r_{1},r_{2},\ldots, r_{m},r_{m+1})$. Thus, its prefix pattern is $(r_{1},r_{2},\ldots, r_{m})$. According to Definition \ref{definition2}, we know that the relative order of $(r_{1},r_{2},\ldots, r_{m})$ is only one, i.e., the result of $R(r_{1},r_{2},\ldots, r_{m})$ is  an OPP, rather than two OPPs. This  contradicts the assumption that \textbf{r} is generated by two different prefix patterns. Hence, each candidate pattern is generated exact once.}
	
	{Finally, we show that all candidate patterns can be generated. Suppose  super-pattern \textbf{r} = $(r_{1},r_{2},\ldots, r_{m},r_{m+1})$ is not generated, the prefix and suffix patterns of \textbf{r} are \textbf{p} = $R(r_{1},r_{2},\ldots, r_{m})$ and \textbf{q} = $R(r_{2},\ldots, r_{m},r_{m+1})$, respectively. There are two cases: (1) pattern \textbf{p} or \textbf{q} is infrequent; (2) patterns \textbf{p} and \textbf{q} are frequent, but super-pattern \textbf{r} cannot be generated by the pattern fusion strategy.}
	
	{Case 1: Suppose pattern \textbf{p} is infrequent, i.e., \textit{sup}(\textbf{p}, \textbf{t})$<$$minsup$. Then, according to the anti-monotonicity,  \textit{sup}(\textbf{r}, \textbf{t})$<$$minsup$. Thus, pattern \textbf{r} is also infrequent. Hence, in this case, it is not necessary to generate super-pattern \textbf{r}. Similarly, if pattern \textbf{q} is infrequent, then it is not necessary to generate super-pattern \textbf{r}, either. }

	{Case 2: Proof by contradiction.  Suppose super-pattern \textbf{r} = $(r_{1},r_{2},\ldots, r_{m})$ cannot be generated by $\textbf{p}$ $\bigoplus$ $\textbf{q}$. We know that \textit{R}(\textit{suffix}(\textbf{p})) = \textit{R}(\textit{prefix}(\textbf{q})) = $R(r_{2},\ldots, r_{m})$. Therefore, we can generate  super-pattern \textbf{r} = $\textbf{p}$ $\bigoplus$ $\textbf{q}$ according to the pattern fusion strategy, which contradicts the assumption that super-pattern \textbf{r}    = $(r_{1},r_{2},\ldots, r_{m},r_{m+1})$ cannot be generated. Hence, all candidate patterns can be generated.}

\end{proof}

{For example, in Table \ref{tab2}, although patterns (3,2,4,1) and (2,1,3,4) cannot be generated by (2,1,3)$\bigoplus$(1,3,2), they can be generated by (2,1,3)$\bigoplus$(2,3,1) and (2,1,3)$\bigoplus$(1,2,3), respectively. This example shows that all patterns can be generated by using the pattern fusion strategy.}

\subsection{SPF for support calculation}\label {sub4.2}

{OPP-Miner adopts a pattern matching method to calculate pattern support, which does not use the calculation results of the sub-patterns \cite {wu2021orde}. If we can use the occurrences of subpatterns to generate the occurrences of super-patterns, then the new method can improve the efficiency, and is feasible.The reason is shown as follows. Suppose pattern \textbf{r} is generated by patterns \textbf{p} and \textbf{q}, i.e., \textbf{r}=\textbf{p} $\bigoplus$ \textbf{q}, and $<$$x$$>$ is an occurrence of pattern \textbf{r}. We can safely say that $<$$x-1$$>$ and $<$$x$$>$ are occurrences of patterns \textbf{p} and \textbf{q}, respectively. Similarly, we know that if $<$$x-1$$>$ is not an occurrence of pattern \textbf{p} or $<$$x$$>$ is not an occurrence of pattern \textbf{q}, then $<$$x$$>$ is not an occurrence of pattern \textbf{r}. An illustrative example is shown as follows.}

{For example, in Fig. \ref{fig:aa}, we know that the relative order of sub-time series ($t_2$,$t_3$,$t_4$,$t_5$) is (3,1,4,2), i.e., $<$$5$$>$ is an occurrence of pattern (3,1,4,2). Therefore, the relative orders of sub-time series ($t_2,t_3,t_4$) and ($t_3,t_4,t_5$) are (2,1,3) and (1,3,2), respectively. Moreover, the relative orders of sub-time series ($t_{12}$,$t_{13}$,$t_{14}$) is (2,1,3), but that of ($t_{13}$,$t_{14}$,$t_{15}$) is not (1,3,2). Therefore, $<$$15$$>$ is not an occurrence of pattern (3,1,4,2). Hence, we propose an algorithm called SPF to calculate the support based on pattern fusion, which can use the occurrences of sub-patterns to generate  the occurrences of super-patterns. The details are shown as follows.}

{From Section \ref{sub4.1},  super-patterns \textbf{r} and \textbf{h} are generated by  \textbf{p} $\oplus$ \textbf{q} which can be seen as the prefix and suffix patterns of the super-patterns, respectively. Suppose $<$$lp_{i}$$>$ and $<$$lq_{j}$$>$ are the occurrences of \textbf{p} and \textbf{q}, respectively. All occurrences of \textbf{p} and \textbf{q} are stored in a prefix array $\mathcal{P}_{\textbf{p}}$ and a suffix array $\mathcal{S}_{\textbf{q}}$, respectively, i.e., $<$$lp_{i}$$> \in \mathcal{P}_{\textbf{p}}$ and $<lq_{j}> \in \mathcal{S}_{\textbf{q}}$. The matching results of super-patterns \textbf{r} and \textbf{h} are stored in $L_{\textbf{r}}$ and $L_{\textbf{h}}$, respectively. This method is demonstrated as follows.}

\textbf{Rule 1.} If $p_{1}\ne q_{m}$, then \textbf{r} = \textbf{p} $\oplus$ \textbf{q}:

{As shown in Fig. \ref{fig:bb}, if and only if $lq_{j}=lp_{i}+1$, then $<$$lq_{j}$$>$ is an occurrence of \textbf{r}, i.e., $lq_{j}$$\in$ $L_{\textbf{r}}$.}

\begin{figure}[!htb]
	\centering
	\includegraphics[width=\linewidth]{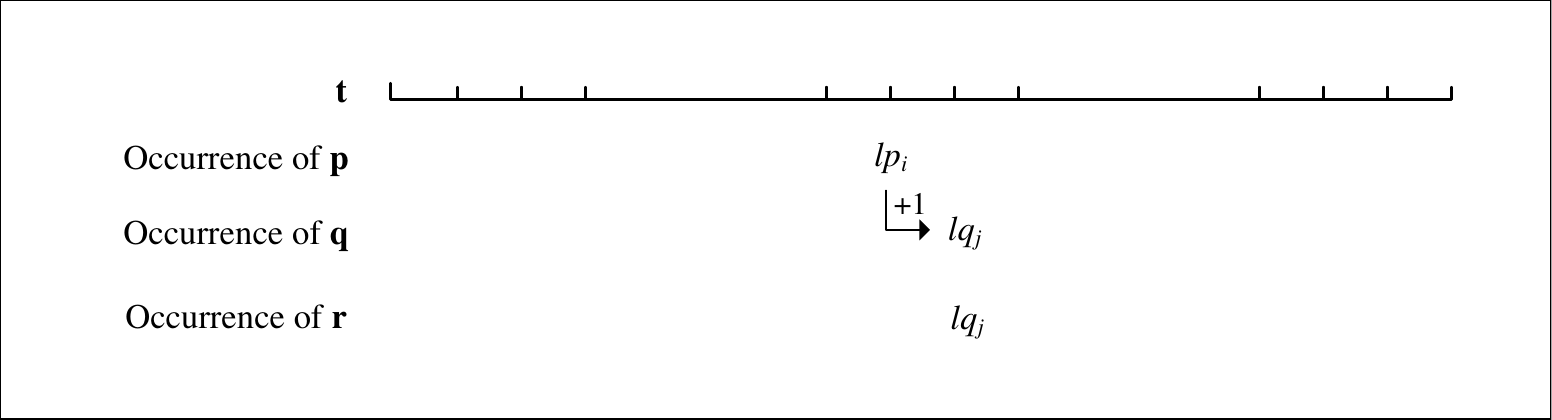}
	\caption{Occurrence of \textbf{r} in sequence \textbf{t}}
	\label{fig:bb}		
\end{figure}

\textbf{Rule 2.} If $p_{1} = q_{m}$, then \textbf{r}, \textbf{h}= \textbf{p} $\oplus$ \textbf{q}:

As shown in Figure \ref{fig:cc}, if $lq_{j} = lp_{i}+1$, then $<$$lq_{j}$$>$ may be an occurrence of \textbf{r} or \textbf{h}. It is necessary to determine $t_{begin}$ and $t_{end}$ in t, where \textit{begin} = $lq_{j} - m$ and \textit{end} = $lq_{j}$. There are three cases:

{\textbf{Case 1:} If $t_{begin}<t_{end}$, then $<$$lq_{j}$$>$ is an occurrence of \textbf{r}, i.e., $lq_{j}$$\in$ $L_{\textbf{r}}$. }

{\textbf{Case 2:} If $t_{begin}>t_{end}$, then $<$$lq_{j}$$>$ is an occurrence of \textbf{h}, i.e., $lq_{j}$$\in$ $L_{\textbf{h}}$. }

{\textbf{Case 3: }If $t_{begin}=t_{end}$, then $<$$lq_{j}$$>$ is an occurrence of neither \textbf{r} nor \textbf{h}.}

\begin{figure}[!htb]
	\centering
	\includegraphics[width=\linewidth]{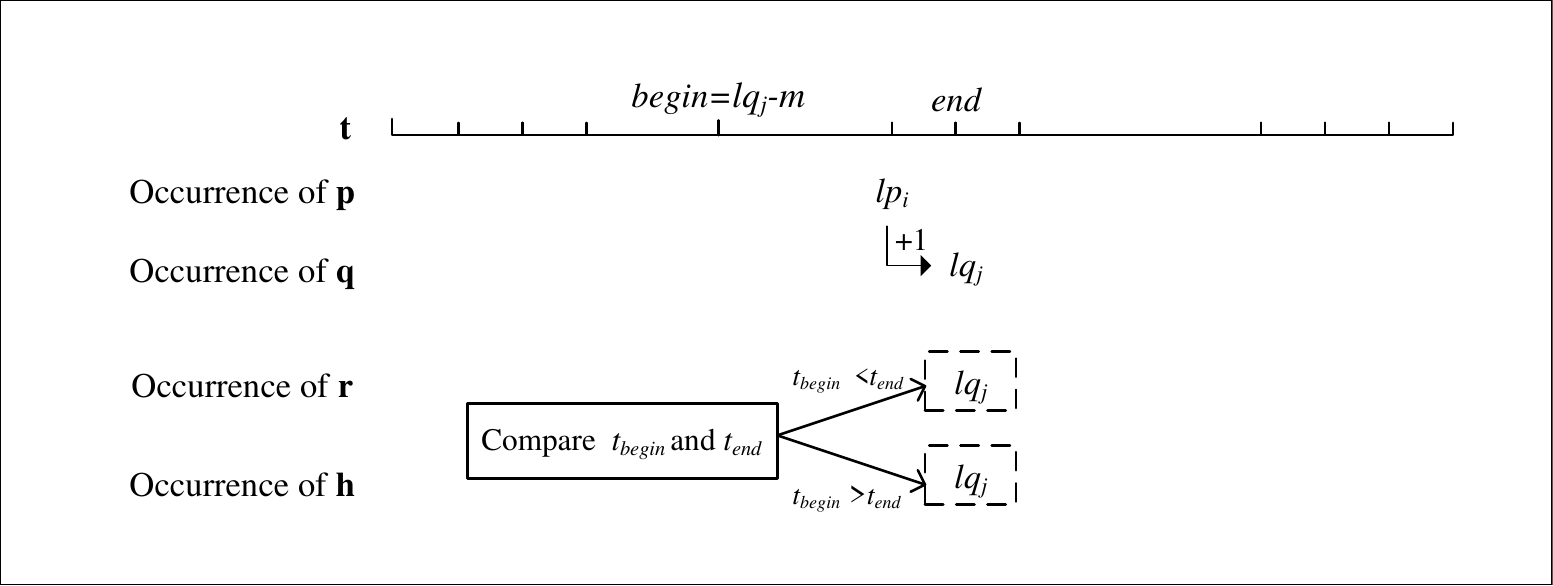}
	\caption{Occurrence of \textbf{r} and \textbf{h} in sequence \textbf{t}}
	\label{fig:cc}		
\end{figure}

Finally, the size of sets $L_{\textbf{r}}$ and $L_{\textbf{h}}$ are the supports of the super-patterns \textbf{r} and \textbf{h}, respectively, i.e., \textit{sup}(\textbf{r}) = $|L_{\textbf{r}}|$ and \textit{sup}(\textbf{h}) = $|L_{\textbf{h}}|$. An illustration is given in Example 5.

\begin{example}\label{example5}
	{	Suppose we have a time series \textbf{t}, as shown in Table \ref{tab3}. The matching sets of length-2 patterns \textbf{p} = (1,2) and \textbf{q} = (2,1) are $L_{\textbf{p}} = \{$$<$2$>$,$<$4$>$,$<$8$>$,$<$10$>$,$<$12$>$,$<$14$>$,$<$16$>$$\}$ and $L_{\textbf{q}} = \{$$<$3$>$,$<$5$>$,$<$6$>$,$<$7$>$,$<$9$>$,$<$11$>$,$<$13$>$,$<$15$>$$\}$, respectively.}
\end{example}

\begin{table}[!htb]
	\centering
	\setlength{\belowcaptionskip}{0.1cm}
	\caption{{Time series element index }}
	\label{tab3}
	\tabcolsep 3.5pt 
	\begin{tabular}{ccccccccccccccccc}
		\hline\noalign{\smallskip}
		\textit{ID}&1&2&3&4&5&6&7&8&9&10&11&12&13&14&15&16  \\
		\hline
		\textit{t}&24&31&27&33&30&24&21&25&23&26&22&27&24&28&23&29  \\
		\hline\noalign{\smallskip}
	\end{tabular}
\end{table}

{\textbf{p} $\oplus$ \textbf{q} can generate two super-patterns, \textbf{r} = (1,3,2) and \textbf{h} = (2,3,1). We know that $\mathcal{P}_{\textbf{p}}$ = $L_{\textbf{p}}$, since \textbf{p} is a prefix pattern. Similarly, $\mathcal{S}_{\textbf{q}}$ = $L_{\textbf{q}}$. Moreover, 2 $\in \mathcal{P}_{\textbf{p}}$ and 2+1 = 3 $\in\mathcal{S}_{\textbf{q}}$. Hence, according to Rule 2, $<$$3$$>$ may be an occurrence of \textbf{r} or \textbf{h}.  \textit{begin} = 3$-$2 = 1 and \textit{end} = 3. Thus, $<$$3$$>$ is one occurrence of \textbf{r}, that is, $<$3$>$ $\in$ $L_{\textbf{r}}$, since $t_{1} = 24 < t_{3}$ = 27. Similarly, we know that the matching set of \textbf{r} is $L_{\textbf{r}} = \{$$<$3$>$,$<$5$>$,$<$9$>$,$<$11$>$,$<$13$>$,$<$15$>$$\}$ and \textit{sup}(\textbf{r}) = 6. The matching set of \textbf{h} is $L_{\textbf{h}} = $\{$<$11$>$,$<$15$>$$\}$  and \textit{sup}(\textbf{h}) = 2.}

\subsection{SPF-Pro for support calculation}\label {sub4.22}

In the SPF algorithm, $\mathcal{P}_{\textbf{p}}$ and $\mathcal{S}_{\textbf{q}}$ are fixed. To further improve the efficiency of SPF, we propose a more efficient approach called SPF-Pro, in which $\mathcal{P}_{\textbf{p}}$ and $\mathcal{S}_{\textbf{q}}$ are reduced dynamically, where the initial values of $\mathcal{P}_{\textbf{p}}$ and $\mathcal{S}_{\textbf{q}}$ are $L_{\textbf{p}}$ and $L_{\textbf{q}}$, respectively, i.e., $\mathcal{P}_{\textbf{p}}$ = $L_{\textbf{p}}$ and $\mathcal{S}_{\textbf{q}}$ = $L_{\textbf{q}}$. This method is called a screening strategy.

\textbf{Screening strategy.} In Rules 1 and 2, if $lp_{i}$ in $\mathcal{P}_{\textbf{p}}$ and $lq_{j}$ in $\mathcal{S}_{\textbf{q}}$ are used to generate an occurrence of \textbf{r} or \textbf{h}, then $lp_{i}$ and $lq_{j}$ can be pruned. The new corresponding rules are shown below as Rules 3 and 4, respectively.

\textbf{Rule 3.} If $p_{1}\ne q_{m}$, then \textbf{r} = \textbf{p} $\oplus$ \textbf{q}:

{If and only if $lq_{j}=lp_{i}+1$, then $<$$lq_{j}$$>$ is an occurrence of \textbf{r}, i.e., $lq_{j}$$\in$ $L_{\textbf{r}}$, and $lp_{i}$ and $lq_{j}$ are pruned from $\mathcal{P}_{\textbf{p}}$ and $\mathcal{S}_{\textbf{q}}$, respectively.}

\textbf{Rule 4.} If $p_{1} = q_{m}$, then \textbf{r}, \textbf{h} = \textbf{p} $\oplus$ \textbf{q}:

{If and only if $lq_{j}=lp_{i}+1$, $<$$lq_{j}$$>$ may be an occurrence of \textbf{r} or \textbf{h}. There are then three cases: }

{\textbf{Case 1}: If $t_{begin}<t_{end}$, then $<$$lq_{j}$$>$ is an occurrence of \textbf{r}, i.e., $lq_{j}$$\in$ $L_{\textbf{r}}$, and $lp_{i}$ and $lq_{j}$ are pruned from $\mathcal{P}_{\textbf{p}}$ and $\mathcal{S}_{\textbf{q}}$, respectively. }

{\textbf{Case 2}: If $t_{begin}>t_{end}$, then $<$$lq_{j}$$>$ is an occurrence of \textbf{h}, i.e., $lq_{j}$$\in$ $L_{\textbf{h}}$, and $lp_{i}$ and $lq_{j}$ are pruned from $\mathcal{P}_{\textbf{p}}$ and $\mathcal{S}_{\textbf{q}}$, respectively. }

{\textbf{Case 3}: If $t_{begin}=t_{end}$, then $<$$lq_{j}$$>$ is an occurrence of neither \textbf{r} nor \textbf{h}.}

To prove the correctness of this screening strategy, we initially prove two theorems.

\begin{theorem}\label {theorem1}
	Suppose \textbf{p} can fuse with $\textbf q_{1}$ and $\textbf q_{2}$, i.e.,  \textbf{r}$_{1}$, \textbf{h}$_{1}$ = \textbf{p} $\bigoplus$ \textbf{q}$_{1}$ and \textbf{r}$_{2}$, \textbf{h}$_{2}$ = \textbf{p} $\bigoplus$ \textbf{q}$_{2}$. If $lp_{i}$ + 1 = $x\in L_{\textbf{r}_1}$ or $L_{\textbf{h}_1}$, then $x\notin L_{\textbf{r}_2}$ or $L_{\textbf{h}_2}$, and vice versa.
\end{theorem}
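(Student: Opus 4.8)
The plan is to reduce the claim to the uniqueness of the relative order of a fixed window (Definition~\ref{definition2}). The structural fact I would exploit is that, by the fusion construction of Section~\ref{sub4.1}, for each $k\in\{1,2\}$ the pattern $\mathbf{q}_k$ is precisely the common suffix OPP of the generated super-patterns, i.e. $\textit{suffix}(\mathbf{r}_k)=\textit{suffix}(\mathbf{h}_k)=\mathbf{q}_k$ (this is how $\mathbf{q}$ enters $\mathbf{p}\bigoplus\mathbf{q}$, and it can also be read off the explicit formulas in Cases~1 and~2). I would also make explicit the hypothesis $\mathbf{q}_1\neq\mathbf{q}_2$ that is implicit in ``$\mathbf{p}$ can fuse with $\mathbf{q}_1$ and $\mathbf{q}_2$'' --- otherwise $\{\mathbf{r}_1,\mathbf{h}_1\}=\{\mathbf{r}_2,\mathbf{h}_2\}$ and the statement is vacuous --- and read ``$x\in L_{\mathbf{r}_1}$ or $L_{\mathbf{h}_1}$'' as $x\in L_{\mathbf{r}_1}\cup L_{\mathbf{h}_1}$.

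The argument itself is a short proof by contradiction. Assume $x=lp_i+1\in L_{\mathbf{r}_1}\cup L_{\mathbf{h}_1}$, so $\langle x\rangle$ is an occurrence of some super-pattern $\mathbf{s}_1\in\{\mathbf{r}_1,\mathbf{h}_1\}$, which has length $m+1$. By Definition~\ref{definition3}, $R(t_{x-m},\ldots,t_x)=R(\mathbf{s}_1)$; restricting to the last $m$ coordinates and using the structural fact above gives $R(t_{x-m+1},\ldots,t_x)=\textit{suffix}(\mathbf{s}_1)=\mathbf{q}_1$. Suppose for contradiction that also $x\in L_{\mathbf{r}_2}\cup L_{\mathbf{h}_2}$; the identical reasoning applied to the corresponding $\mathbf{s}_2\in\{\mathbf{r}_2,\mathbf{h}_2\}$ yields $R(t_{x-m+1},\ldots,t_x)=\mathbf{q}_2$. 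Since the relative order of the fixed sub-time series $(t_{x-m+1},\ldots,t_x)$ is unique by Definition~\ref{definition2}, we obtain $\mathbf{q}_1=\mathbf{q}_2$, contradicting $\mathbf{q}_1\neq\mathbf{q}_2$. Hence $x\notin L_{\mathbf{r}_2}\cup L_{\mathbf{h}_2}$, and the ``vice versa'' direction follows by interchanging the subscripts $1$ and $2$.

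I do not expect a genuine obstacle: everything reduces to index bookkeeping --- translating the occurrence marker $\langle x\rangle$ of an $(m+1)$-length pattern into the window $(t_{x-m},\ldots,t_x)$ and then into its length-$m$ suffix window $(t_{x-m+1},\ldots,t_x)$ --- plus the two points worth stating carefully, namely (i) that $\mathbf{q}_k$ is the common suffix OPP of $\mathbf{r}_k$ and $\mathbf{h}_k$, which I would justify directly from the fusion rules, and (ii) the distinctness assumption $\mathbf{q}_1\neq\mathbf{q}_2$. Once these are in place the result is immediate from the uniqueness of a window's relative order.
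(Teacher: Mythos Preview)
Your proposal is correct and follows essentially the same approach as the paper: a proof by contradiction resting on the fact that a fixed sub-time series has a unique relative order (Definition~\ref{definition2}). The paper's version is terser --- it works directly at the level of the length-$(m+1)$ super-patterns, observing that $\langle x\rangle$ cannot simultaneously be an occurrence of two distinct patterns $\mathbf{r}_1$ and $\mathbf{r}_2$ of the same length --- whereas you descend one step further to the length-$m$ suffix window and obtain $\mathbf{q}_1=\mathbf{q}_2$; your route has the minor advantage of making explicit both the distinctness hypothesis $\mathbf{q}_1\neq\mathbf{q}_2$ and the reason the super-patterns differ, points the paper leaves implicit under ``obviously.''
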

\begin{proof}
	{	 (Proof by contradiction) Suppose  $lp_{i}$ + 1 = $x\in L_{\textbf{r}_1}$ and $x\in L_{\textbf{r}_2}$. Since $x\in L_{\textbf{r}_1}$, we know that $<$$x$$>$ is an occurrence of \textbf{r}$_{1}$. Similarly, $<$$x$$>$ is also an occurrence of \textbf{r}$_{2}$. Obviously, $<$$x$$>$ cannot be two occurrences for two different patterns with the same length. Hence, this does not hold and the assumption is contradicted; that is, $x\notin L_{\textbf{r}_2}$, and vice versa.} 
\end{proof}
\begin{theorem}\label{theorem2}
	Suppose \textbf{p$_{1}$} and \textbf{p$_{2}$} can fuse with \textbf{q}, i.e., \textbf{r}$_{1}$, \textbf{h}$_{1}$ = \textbf{p$_{1}$} $\bigoplus$ \textbf{q} and \textbf{r}$_{2}$, \textbf{h}$_{2}$ = \textbf{p}$_{2}$ $\bigoplus$ \textbf{q}. If $lq_{j}$ = $x\in L_{\textbf{r}_1}$ or $L_{\textbf{h}_1}$, then $x\notin L_{\textbf{r}_2}$ or $L_{\textbf{h}_2}$, and vice versa.
\end{theorem}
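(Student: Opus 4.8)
The plan is to mirror the proof of Theorem~\ref{theorem1}, swapping the roles of prefix and suffix, and to reduce the claim to the uniqueness of the relative order of a sub-time series (Definition~\ref{definition2}). The first step is to record that, for any fusion $\textbf{r}, \textbf{h} = \textbf{p} \bigoplus \textbf{q}$, the construction of Section~\ref{sub4.1} produces super-patterns whose prefix is order-isomorphic to $\textbf{p}$, i.e.\ $R(\mathrm{prefix}(\textbf{r})) = R(\mathrm{prefix}(\textbf{h})) = \textbf{p}$ (and whose suffix is order-isomorphic to $\textbf{q}$). I would use this to establish that the four patterns $\textbf{r}_1, \textbf{h}_1, \textbf{r}_2, \textbf{h}_2$ are pairwise distinct whenever $\textbf{p}_1 \neq \textbf{p}_2$: any equality between two of them that pairs a ``$\textbf{p}_1$-pattern'' with a ``$\textbf{p}_2$-pattern'' would force $\textbf{p}_1 = \textbf{p}_2$, while $\textbf{r}_i \neq \textbf{h}_i$ holds outright because the fusion sets $r_1 = p_1$ but $h_1 = p_1 + 1$.

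With distinctness in hand I would run the contradiction argument. Suppose some index $x = lq_j$ belonged to two of the matching sets $L_{\textbf{r}_1}, L_{\textbf{h}_1}, L_{\textbf{r}_2}, L_{\textbf{h}_2}$; say $x \in L_{\textbf{a}}$ and $x \in L_{\textbf{b}}$ with $\textbf{a} \neq \textbf{b}$ two of those length-$(m{+}1)$ patterns. Then the single sub-time series $\textbf{t}' = (t_{x-m}, \ldots, t_x)$ satisfies $R(\textbf{t}') = \textbf{a}$ and $R(\textbf{t}') = \textbf{b}$, so $\textbf{a} = \textbf{b}$ by the well-definedness of the relative order $R(\textbf{t}')$ --- a contradiction. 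Hence $x$ lies in at most one of the four sets, which is exactly the assertion that $x \in L_{\textbf{r}_1} \cup L_{\textbf{h}_1}$ implies $x \notin L_{\textbf{r}_2} \cup L_{\textbf{h}_2}$; the ``vice versa'' is the same statement with the subscripts $1$ and $2$ swapped.

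I expect the only delicate point to be the case bookkeeping: the statement is phrased for the two-pattern fusion (Case~2 of Section~\ref{sub4.1}, where both $\textbf{r}_i$ and $\textbf{h}_i$ exist), but when $p_1 \neq q_m$ (Case~1) each fusion yields a single super-pattern, so the distinctness claim should be stated so that it degenerates cleanly when $\textbf{h}_1$ or $\textbf{h}_2$ is absent. In every case the argument collapses to the same observation used for Theorem~\ref{theorem1} --- a fixed window-end index $x$ realizes exactly one relative order of each length, and distinct prefix patterns yield distinct fused patterns --- so the screening in Rules~3 and~4 is sound: an occurrence index consumed while fusing $\textbf{q}$ with $\textbf{p}_1$ can never be needed again while fusing $\textbf{q}$ with a different $\textbf{p}_2$.
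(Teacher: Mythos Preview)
Your proposal is correct and follows essentially the same approach as the paper: the paper's own proof simply reads ``The proof method is the same as for Theorem~\ref{theorem1},'' and Theorem~\ref{theorem1} is exactly the contradiction-via-uniqueness-of-$R(\textbf{t}')$ argument you describe. Your version is more careful than the paper's --- you explicitly verify that the fused patterns are pairwise distinct and you handle the Case~1/Case~2 bookkeeping, whereas the paper just asserts that ``$\langle x \rangle$ cannot be two occurrences for two different patterns with the same length'' --- but the core idea is identical.
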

\begin{proof}
	The proof method is the same as for Theorem \ref{theorem1}.
\end{proof}
\begin{theorem}\label{theorem4}
	The screening strategy is correct.
\end{theorem}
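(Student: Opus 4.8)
The plan is to read "correct" as the assertion that, for every candidate super-pattern $\mathbf{r}$ (and, in Case~2, $\mathbf{h}$), the set $L_{\mathbf{r}}$ (resp. $L_{\mathbf{h}}$) produced by Rules~3--4 with the dynamically shrinking arrays $\mathcal{P}_{\mathbf{p}}$ and $\mathcal{S}_{\mathbf{q}}$ coincides with the true occurrence set, i.e. with the set produced by Rules~1--2 using the fixed arrays $L_{\mathbf{p}}$ and $L_{\mathbf{q}}$. Since Rules~3--4 differ from Rules~1--2 only by deleting an entry from $\mathcal{P}_{\mathbf{p}}$ and from $\mathcal{S}_{\mathbf{q}}$ once that entry has been consumed to emit an occurrence, it suffices to prove that no deleted entry is ever needed again; equivalently, that each $lp_{i}\in L_{\mathbf{p}}$ and each $lq_{j}\in L_{\mathbf{q}}$ is consumed at most once over all the fusions in which $\mathbf{p}$ plays the prefix role (resp. $\mathbf{q}$ plays the suffix role).

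First I would fix the bookkeeping. The array $\mathcal{P}_{\mathbf{p}}$ is touched only when $\mathbf{p}$ is the prefix pattern of some fusion $\mathbf{p}\oplus\mathbf{q}$, and $\mathcal{S}_{\mathbf{q}}$ only when $\mathbf{q}$ is the suffix pattern of some fusion; moreover the copy of $L_{\mathbf{p}}$ serving as a prefix array is logically separate from the copy serving as a suffix array, so the two kinds of deletion do not interfere. Within a single fusion $\mathbf{p}\oplus\mathbf{q}$, an index $lp_{i}$ is paired only with the single value $lp_{i}+1$, which occupies at most one slot of $\mathcal{S}_{\mathbf{q}}$, and in Case~2 the resulting index $lq_{j}=lp_{i}+1$ is, by Rule~4, an occurrence of exactly one of $\mathbf{r}$, $\mathbf{h}$, or of neither. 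Hence inside one fusion the pair $(lp_{i},lq_{j})$ is consumed at most once.

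Next I would rule out re-use across distinct fusions. Suppose $lp_{i}$ is deleted from $\mathcal{P}_{\mathbf{p}}$ while processing $\mathbf{p}\oplus\mathbf{q}$; by Rules~3--4 this happens precisely because $x:=lp_{i}+1$ satisfies $x\in L_{\mathbf{r}}$ or $x\in L_{\mathbf{h}}$, where $\mathbf{r},\mathbf{h}=\mathbf{p}\oplus\mathbf{q}$. Consider any other fusion $\mathbf{p}\oplus\mathbf{q}'$ with $\mathbf{q}'\ne\mathbf{q}$, producing $\mathbf{r}',\mathbf{h}'$. Any occurrence of $\mathbf{r}'$ or $\mathbf{h}'$ that would require the entry $lp_{i}$ must, by Rules~1--2, be the index $x=lp_{i}+1$; but Theorem~\ref{theorem1} gives $x\notin L_{\mathbf{r}'}$ and $x\notin L_{\mathbf{h}'}$. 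Therefore $lp_{i}$ is never needed after its deletion, so dropping it from $\mathcal{P}_{\mathbf{p}}$ alters none of the computed occurrence sets; the symmetric argument for a deleted $lq_{j}\in\mathcal{S}_{\mathbf{q}}$ uses Theorem~\ref{theorem2} in place of Theorem~\ref{theorem1}. Combining the within-fusion and across-fusion cases shows every entry is consumed at most once, so Rules~3--4 output exactly the true $L_{\mathbf{r}}$ and $L_{\mathbf{h}}$ for every super-pattern, and the screening strategy is correct.

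The mathematics here is light: it is carried entirely by Theorems~\ref{theorem1} and~\ref{theorem2}, so the real work is the careful accounting. The step I expect to need the most care is the exhaustive enumeration of the ways an index can be consumed --- the two super-patterns $\mathbf{r},\mathbf{h}$ born of the same fusion, the several suffix patterns a fixed prefix $\mathbf{p}$ may fuse with, and dually the several prefixes for a fixed suffix $\mathbf{q}$ --- while keeping the prefix-role and suffix-role copies of each $L_{\mathbf{p}}$ separate and observing that the processing order of candidate fusions is irrelevant, since "consumed at most once" is an order-independent property.
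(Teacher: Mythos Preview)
Your proposal is correct and follows essentially the same approach as the paper: both arguments reduce correctness of the screening strategy to the claim that a deleted index is never needed again, and both discharge that claim by invoking Theorem~\ref{theorem1} for the prefix array and Theorem~\ref{theorem2} for the suffix array. Your version is considerably more explicit about the bookkeeping---separating the within-fusion and across-fusion cases, distinguishing the prefix-role and suffix-role copies of each $L_{\mathbf p}$, and noting order-independence---whereas the paper's proof is a three-sentence appeal to Theorems~\ref{theorem1} and~\ref{theorem2}; but the underlying logical skeleton is the same.
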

\begin{proof}
	{According to Theorem \ref{theorem1}, $<$$lp_{i}$$>$ belongs to only one pattern. Hence, if $<$$lp_{i}$$>$ is used to generate an occurrence of its super-pattern, then $<$$lp_{i}$$>$ can be pruned. Similarly, according to Theorem \ref {theorem2}, $<$$lq_{j}$$>$ can also be pruned. We have therefore proved the correctness of the screening strategy.}
\end{proof}

Example \ref{example6} is used to demonstrate that SPF-Pro outperforms SPF.

\begin{example}\label{example6}
	{	We adopt the same data as in Example \ref {example5}. We know that \textbf{q}=(2,1), and \textbf{q} can fuse with \textbf{q}, i.e., \textbf{e} = \textbf{q}$\bigoplus$\textbf{q} = (3,2,1). $\mathcal{P}_{\textbf{q}}$ = $\mathcal{S}_{\textbf{q}}$ = $L_{\textbf{q}}$ = $\{$$<$3$>$, $<$5$>$,$<$6$>$,$<$7$>$,$<$9$>$,$<$11$>$,$<$13$>$,$<$15$>$$\}$.   According to SPF, we know that  $L_{\textbf{e}}$ = $\{$$<$6$>$,$<$7$>$$\}$ and sup(\textbf{e}) = 2.}
	
	{We now show that SPF-Pro yields better performance than SPF. In Example \ref{example5}, we know that the super-patterns \textbf{r} and \textbf{h} are generated. According to the screening strategy, $<$$3$$>$ is an occurrence of \textbf{r} = (1,3,2). Hence, $3\notin\mathcal{S}_{\textbf{q}}$, and 3 is pruned from $\mathcal{S}_{\textbf{q}}$. Similarly, according to SPF-Pro, we know that $\mathcal{S}_{\textbf{q}}$ = $\{$$<$6$>$,$<$7$>$$\}$. SPF-Pro then uses $\mathcal{P}_{\textbf{q}}$ = $\{$$<$3$>$,$<$5$>$,$<$6$>$,$<$7$>$,$<$9$>$,$<$11$>$,$<$13$>$,$<$15$>$$\}$ and $\mathcal{S}_{\textbf{q}}$ = $\{$$<$6$>$,$<$7$>$$\}$ to calculate the support of \textbf{e}. Moreover, $L_{\textbf{e}}$ = $\{$$<$6$>$,$<$7$>$$\}$ and \textit{sup}(\textbf{e}) = 2, which are the same as for SPF. Now, we can see that in SPF, $\mathcal{S}_{\textbf{q}}$ = $L_{\textbf{q}}$ = $\{$$<$3$>$, $<$5$>$,$<$6$>$,$<$7$>$,$<$9$>$,$<$11$>$,$<$13$>$,$<$15$>$$\}$, while in SPF-Pro, $\mathcal{S}_{\textbf{q}}$ = $\{$$<$6$>$,$<$7$>$$\}$, with a size that is significantly smaller than in SPF. Hence, SPF-Pro outperforms SPF.}
\end{example}

Pseudocode for SPF-Pro is given in Algorithm \ref {Algorithm 2}, which calculates the supports of the super-patterns using the pattern fusion strategy.

\begin{algorithm}[htb]
	\caption{SPF-Pro}	\label{Algorithm 2}
	{\bf Input:}
	Pattern \textbf{p} and its matching result $\mathcal{P}_{\textbf{p}}$, pattern \textbf{q} and its matching result $\mathcal{S}_{\textbf{q}}$
	{\bf Output:} Super-patterns and their matching results, and $\mathcal{P}_{\textbf{p}}$ and $\mathcal{S}_{\textbf{q}}$
	\begin{algorithmic}[1]
		\State $L_{\textbf{r}}$ = $\{\}$;$L_{\textbf{h}}$ = $\{\}$;
		\State \textbf{e} $\leftarrow$ \textit{R}(\textit{prefix}(\textbf{p}));  
		\State \textbf{k} $\leftarrow$ \textit{R}(\textit{suffix}(\textbf{q}));
		\If {\textbf{k} == \textbf{e} }	
		\If {\textbf{p}[0] == \textbf{q}[$m-$1] }
		\State \textbf{r}$\cup$\textbf{h}  $\leftarrow$ \textbf{p}$\oplus$\textbf{q};
		\State Calculate $L_{\textbf{r}}$ and $L_{\textbf{h}}$, and update $\mathcal{P}_{\textbf{p}}$ and $\mathcal{S}_{\textbf{q}}$ according to Rule 4;
		\Else
		\State \textbf{r} $\leftarrow$ \textbf{p}$\oplus$\textbf{q};
		\State Calculate $L_{\textbf{r}}$, and update $\mathcal{P}_{\textbf{p}}$ and $\mathcal{S}_{\textbf{q}}$ according to Rule 3;
		\EndIf
		\EndIf
		\State Return \textbf{r}, \textbf{h}, $\mathcal{P}_{\textbf{p}}$ and $\mathcal{S}_{\textbf{q}}$;
	\end{algorithmic}
\end{algorithm}
\subsection{Pruning candidate patterns}

\label {sub4.3}
In this section, we propose a pruning strategy to further prune candidate patterns based on SPF-Pro.

\textbf{Pruning strategy.} If $|\mathcal{P}_{\textbf{p}}|$ $<$ \textit{minsup}, then \textbf{p} as a prefix pattern will no longer generate frequent patterns. If $|\mathcal{S}_{\textbf{q}}|$ $<$ \textit{minsup}, then \textbf{q} as a suffix pattern will no longer generate frequent patterns.
\begin{theorem}\label{theorempruning}
	The pruning strategy is correct.
\end{theorem}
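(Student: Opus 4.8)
The plan is to reduce the statement to a single-level support bound and then close up under taking super-patterns. Concretely, I would show: at the moment the algorithm tests the condition $|\mathcal{P}_{\textbf{p}}| < \textit{minsup}$, every super-pattern that $\textbf{p}$ could still produce as a prefix has support at most $|\mathcal{P}_{\textbf{p}}|$, hence is infrequent; then the anti-monotonicity established inside the proof of Theorem \ref{theorem0} propagates infrequency to all of its descendants, so nothing frequent is ever lost by discarding $\textbf{p}$ as a prefix. The suffix half is symmetric, with $\mathcal{S}_{\textbf{q}}$, Rule 4/Rule 3 on the right operand, and Theorem \ref{theorem2} in place of Theorem \ref{theorem1}, so I would prove the prefix case in full and remark that the suffix case is identical.

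For the core step I would fix $\textbf{p}$ at the instant it is selected as a prefix, with current array $\mathcal{P}_{\textbf{p}} \subseteq L_{\textbf{p}}$ (the screening strategy only ever deletes elements, so this inclusion is an invariant), and let $\textbf{q}_1,\textbf{q}_2,\ldots$ be the suffix patterns with which it will still be fused, producing super-patterns $\textbf{r}_1,\textbf{h}_1,\textbf{r}_2,\textbf{h}_2,\ldots$. By the occurrence characterisation recalled before Example \ref{example5}, any occurrence $\langle x\rangle$ of one of these super-patterns forces $\langle x-1\rangle$ to be an occurrence of $\textbf{p}$, and Rules 3--4 emit such an occurrence only while the matching element $lp_i = x-1$ is still present in $\mathcal{P}_{\textbf{p}}$, deleting it immediately. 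By Theorem \ref{theorem1}, a prefix occurrence $\langle lp_i\rangle$ can feed at most one of the super-patterns obtained from $\textbf{p}$, so the element deleted on behalf of $\textbf{r}_k$ (resp. $\textbf{h}_k$) is not needed by any other $\textbf{r}_{k'}$ or $\textbf{h}_{k'}$; in particular the $lp_i$'s underlying the occurrences of a given super-pattern $\textbf{s}\in\{\textbf{r}_k,\textbf{h}_k\}$ are all still in $\mathcal{P}_{\textbf{p}}$ at selection time. This gives an injection from the occurrence set of $\textbf{s}$ into $\mathcal{P}_{\textbf{p}}$, whence $\textit{sup}(\textbf{s},\textbf{t}) \le |\mathcal{P}_{\textbf{p}}| < \textit{minsup}$. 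So $\textbf{s}$ is infrequent, and by anti-monotonicity every super-pattern of $\textbf{s}$ is infrequent as well; thus no frequent pattern is reachable through $\textbf{p}$ as a prefix, which is exactly the claim. Replacing $\mathcal{P}_{\textbf{p}}$, Theorem \ref{theorem1} and Rule 3/4's left side by $\mathcal{S}_{\textbf{q}}$, Theorem \ref{theorem2} and the right side yields the suffix statement.

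The main obstacle is the bookkeeping that the bound is taken against the \emph{current} reduced array rather than the original $L_{\textbf{p}}$: one must argue that the screening deletions performed while fusing $\textbf{p}$ with $\textbf{q}_1$ cannot destroy an element that a later fusion with $\textbf{q}_2$ would need, and this is precisely what Theorem \ref{theorem1} supplies — each element of $\mathcal{P}_{\textbf{p}}$ is "claimed" by a unique super-pattern, so the deletions partition (a subset of) $\mathcal{P}_{\textbf{p}}$ among the super-patterns instead of overlapping. A secondary point worth a sentence is why it suffices to bound the support of the immediately generated super-patterns and not of every later generation: anti-monotonicity collapses that to the single-level claim. Everything else is routine.
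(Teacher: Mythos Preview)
Your proposal is correct and rests on the same injection the paper uses: each occurrence in $L_{\textbf{r}}$ for $\textbf{r}=\textbf{p}\oplus\textbf{q}$ is produced from an element $lp_i$ of the current $\mathcal{P}_{\textbf{p}}$ via $lq_j=lp_i+1$, so $|L_{\textbf{r}}|\le|\mathcal{P}_{\textbf{p}}|<\textit{minsup}$. The paper states this bound in a single line and stops, without unpacking the screening bookkeeping via Theorem~\ref{theorem1} (it tacitly leans on Theorem~\ref{theorem4} for $|L_{\textbf{r}}|=\textit{sup}(\textbf{r},\textbf{t})$) or invoking anti-monotonicity for deeper descendants; your write-up is more explicit on both counts, but the underlying argument is the same.
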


\begin{proof}
	Suppose pattern \textbf{p} can fuse with pattern \textbf{q}, i.e., \textbf{r}, \textbf{h} = \textbf{p}$\oplus$\textbf{q}. Obviously, the sizes of $L_{\textbf{r}}$ and $L_{\textbf{h}}$ are not greater than the size of $\mathcal{P}_{\textbf{p}}$ or $\mathcal{S}_{\textbf{q}}$, since according to SPF-Pro, if and only if $lq_{j}=lp_{i}+1$ ($lp_{i}\in \mathcal{P}_{\textbf{p}}$, $lq_{j}\in \mathcal{S}_{\textbf{q}}$), $lq_{j}\in L_{\textbf{r}}$. Thus, $|L_{\textbf{r}}| \leq |\mathcal{P}_{\textbf{p}}|$. Therefore, $|L_{\textbf{r}}| <$ \textit{minsup}, since $|\mathcal{P}_{\textbf{p}}| <$  \textit{minsup}. Hence, \textbf{p} as a prefix pattern will no longer generate frequent patterns. Similarly, we can prove that \textbf{q} as a suffix pattern will no longer generate frequent patterns.
\end{proof}

Example \ref{example7} illustrates the effectiveness of  pruning strategy.
\begin{example}\label{example7}
	We use the same data as in Example \ref{example6}. We know that \textbf{p} = (1,2) and \textbf{q} = (2,1). According to Rule 4, after  two patterns \textbf{r} = (1,3,2) and \textbf{h} = (2,3,1) are  generated by \textbf{p}$\bigoplus$\textbf{q}, we know that $\mathcal{S}_{\textbf{q}}$ = $\{$$<$6$>$,$<$7$>$$\}$. Suppose \textit{minsup} = 3. If we do not apply the pruning strategy, according to Example 6, we have to use $\mathcal{P}_{\textbf{q}}$ = $\{$$<$3$>$, $<$5$>$,$<$6$>$,$<$7$>$,$<$9$>$,$<$11$>$,$<$13$>$,$<$15$>$$\}$ and $\mathcal{S}_{\textbf{q}}$ = $\{$$<$6$>$,$<$7$>$$\}$ to calculate the support of \textbf{e} = \textbf{q}$\bigoplus$\textbf{q} = (3,2,1). We know that $L_{\textbf{e}}$ = $\{$$<$6$>$,$<$7$>$$\}$ and \textit{sup}(\textbf{e}) = 2, and pattern \textbf{e} is not a frequent pattern. However, according to the pruning strategy, we do not need to use $\mathcal{P}_{\textbf{q}}$ and $\mathcal{S}_{\textbf{q}}$ = $\{$$<$6$>$,$<$7$>$$\}$ to calculate the support of e, since $|\mathcal{S}_{\textbf{q}}|$ = 2 $<$ \textit{minsup}. Hence, we can avoid calculating \textbf{q}$\bigoplus$\textbf{q} using this approach.
\end{example}

\subsection{Mining OPPs}\label{sub4.4}

In this section, we introduce the EFO-Miner algorithm to discover frequent OPPs.

The steps of EFO-Miner are as follows.

Step 1: Scan the time series \textbf{t} to calculate the matching results and the supports of patterns (1,2) and (2,1). If the pattern is frequent, then it is stored into the frequent pattern set $F_{2}$;

Step 2: Select any two patterns \textbf{p} and \textbf{q} in $F_{m}$. If pattern \textbf{p} can fuse with pattern \textbf{q}, then \textbf{p}$\bigoplus$\textbf{q} can generate candidate super-patterns \textbf{r} and \textbf{h}. If $|\mathcal{P}_{\textbf{p}}|\geq minsup$ and $|\mathcal{S}_{\textbf{q}}|\geq minsup$, then use SPF-Pro to calculate the matching results and the supports of super-patterns \textbf{r} and \textbf{h}. If \textbf{r} or \textbf{h} is frequent, store it in the set $F_{m+1}$;

Step 3: Iterate Step 2 until no (\textit{m}+1)-length super-pattern is generated;

Step 4: Iterate Steps 2 and 3 until $F_{m+1}$ is empty. 

Finally, all frequent patterns $F$ = $F_{2}\cup F_{3}\cup...F_{m}$.

Example 8 illustrates the principle of EFO-Miner.

\begin{example}\label{example8}
	We use the same data as in Example \ref{example5}. Suppose \textit{minsup} = 3. We can discover all frequent patterns as follows.
	
	First, the matching sets of length-2 patterns \textbf{p} = (1,2) and \textbf{q} = (2,1) are $L_{\textbf{p}}$ = $\{$$<$2$>$,$<$4$>$,$<$8$>$,$<$10$>$, $<$12$>$,$<$14$>$,$<$16$>$$\}$ and $L_{\textbf{q}}$ = $\{$$<$3$>$,$<$5$>$,$<$6$>$,$<$7$>$,$<$9$>$, $<$11$>$,$<$13$>$,$<$15$>$$\}$, respectively. Therefore, $\mathcal{P}_{\textbf{p}}$ = $\mathcal{S}_{\textbf{p}}$ = $L_{\textbf{p}}$ = $\{$$<$2$>$,$<$4$>$,$<$8$>$,$<$10$>$, $<$12$>$,$<$14$>$,$<$16$>$$\}$ and $\mathcal{P}_{\textbf{q}}$ = $\mathcal{S}_{\textbf{q}}$ = $L_{\textbf{q}}$ = $\{$$<$3$>$,$<$5$>$,$<$6$>$,$<$7$>$,$<$9$>$,$<$11$>$,$<$13$>$, $<$15$>$$\}$. Since \textit{sup}(\textbf{p}) = 7 and \textit{sup}(\textbf{q}) = 8, we know that $F_{2}$ = $\{$(1,2), (2,1)$\}$.
	
	EFO-Miner now finds frequent patterns with length three. \textbf{p}$\bigoplus$\textbf{p} = (1,2)$\bigoplus$(1,2) = (1,2,3). According to SPF-Pro, \textit{sup}(1,2,3) = 0, and $\mathcal{P}_{\textbf{q}}$ and $\mathcal{S}_{\textbf{q}}$ are not changed. Similarly, \textbf{p}$\bigoplus$\textbf{q} generates two candidate patterns, (1,3,2) and (2,3,1). SPF-Pro calculates $L_{(1,3,2)}$ = $\{$$<$3$>$, $<$5$>$,$<$9$>$,$<$11$>$,$<$13$>$,$<$15$>$$\}$, \textit{sup}(1,3,2) = 6 and \textit{sup}(2,3,1) = 0. Meanwhile, $\mathcal{P}_{\textbf{p}}$ = $\{$16$\}$ and $\mathcal{S}_{\textbf{q}}$ = $\{$$<$6$>$,$<$7$>$$\}$. Thus, (1,3,2) is a frequent pattern. When the pruning strategy is used, \textbf{p} as a prefix pattern and \textbf{q} as a suffix pattern will no longer generate frequent patterns. In a similar way, (2,1,3) can be found. Hence, the length-3 frequent pattern set $F_{3}$ = $\{$(1,3,2), (2,1,3)$\}$ is obtained. Moreover, length-4 frequent patterns can be calculated based on $F_{3}$. Finally, we get the frequent pattern set \textit{F} = $\{$(1,2), (2,1), (1,3,2), (2,1,3), (1,3,2,4), (3,1,4,2)$\}$.
	
\end{example}

Pseudocode for EFO-Miner is given in Algorithm \ref {Algorithm 3}.

\begin{algorithm}[htb]
	
	\caption{EFO-Miner}	\label{Algorithm 3}
	{\bf Input:} Time series \textbf{t} and the minimum support threshold \textit{minsup}
	{\bf Output:} Frequent pattern set \textit{F}
	\begin{algorithmic}[1]		
		\State Scan sequence \textbf{t}, and use $L_{(1,2)}$ and $L_{(2,1)}$ to store the matching sets of (1,2) and (2,1), respectively. If the size of $L_{(1,2)}$ is no less than \textit{minsup}, then add pattern (1,2) to $F_{2}$. Similarly, add pattern (2,1) to $F_{2}$;
		\State \textit{m} $\leftarrow$ 2;  
		\While {$F_{m}$ $<>$ NULL}
		\For {each \textbf{p} in $F_{m}$}
		\State $\mathcal{P}_{\textbf{p}}$ $\leftarrow$ $\mathcal{S}_{\textbf{p}}$ $\leftarrow$ $L_{\textbf{p}}$; 
		\EndFor
		\For {each \textbf{p} in $F_{m}$}
		\For {each \textbf{q} in $F_{m}$}
		\If {$\mathcal{P}_{\textbf{p}}.size\geq minsup$ $\&\&$ $\mathcal{S}_{\textbf{q}}.size\geq minsup$}
		\If {\textbf{p} can fuse with \textbf{q}}
		\State Calculate the matching results of super-patterns \textbf{r} and \textbf{h} and update $\mathcal{P}_{\textbf{p}}$ and $\mathcal{S}_{\textbf{q}}$ using SPF-Pro;
		\If {$L_{\textbf{r}}.size\geq minsup$} 
		\State Add pattern \textbf{r} to $F_{m+1}$;
		\EndIf
		\If {$L_{\textbf{h}}.size\geq minsup$ }
		\State Add pattern \textbf{h} to $F_{m+1}$;
		\EndIf
		\EndIf
		\EndIf
		\EndFor
		\EndFor
		\State \textit{m} $\leftarrow$ \textit{m}+1
		\EndWhile
		\State Return \textit{F};
	\end{algorithmic}
\end{algorithm}

\begin{theorem} \label{theoremefo}
	{EFO-Miner is correct and complete.}
\end{theorem}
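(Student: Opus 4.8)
The plan is to split the statement into two halves --- \emph{soundness} (every pattern returned by EFO-Miner is a frequent OPP, and the algorithm halts) and \emph{completeness} (every frequent OPP is returned) --- and to prove each by induction on the pattern length $m$, carrying the stronger invariant that, once EFO-Miner has finished level $m$, the set $F_m$ is \emph{exactly} the collection of frequent OPPs of length $m$ and every $\textbf{p}\in F_m$ is stored together with its true occurrence set $L_{\textbf{p}}$. All the required ingredients are already available: the correctness and completeness of pattern fusion and the anti-monotonicity of OPP support, both from Theorem~\ref{theorem0}; the exactness of the support computation, i.e.\ Rules~3--4 combined with the correctness of the screening strategy (Theorem~\ref{theorem4}), which rests in turn on the occurrence-uniqueness results Theorems~\ref{theorem1} and~\ref{theorem2}; and the correctness of the pruning strategy (Theorem~\ref{theorempruning}).

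First I would dispose of termination: a frequent OPP has at least one occurrence, so no frequent OPP has length greater than $n=|\textbf{t}|$; hence $F_m$ is empty once $m>n$, the \textbf{while} loop of Algorithm~\ref{Algorithm 3} stops, and the inner \textbf{for} loops are finite because $F_m$ is finite. For soundness, the base case $m=2$ holds because the initial scan makes $L_{(1,2)}$ and $L_{(2,1)}$ exact and adds each of $(1,2)$, $(2,1)$ --- the only OPPs of length $2$ --- to $F_2$ precisely when its support reaches \textit{minsup}. For the step, assume the invariant at level $m$: whenever EFO-Miner fuses a pair $\textbf{p},\textbf{q}\in F_m$ it calls SPF-Pro, which by Rules~3--4 and Theorem~\ref{theorem4} returns $L_{\textbf{r}}$ and $L_{\textbf{h}}$ \emph{exactly} --- the dynamic shrinking of $\mathcal{P}_{\textbf{p}}$ and $\mathcal{S}_{\textbf{q}}$ discarding only occurrences that cannot recur for any other super-pattern --- and by Theorem~\ref{theorem0} each length-$(m{+}1)$ candidate is produced by a single fusion, so nothing is processed twice. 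Since a super-pattern enters $F_{m+1}$ only when this exact support is at least \textit{minsup}, $F_{m+1}$ contains exactly frequent OPPs with exact occurrence sets, which is the invariant at level $m+1$ and hence soundness.

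For completeness I induct on $m$ with the same invariant, the base $m\le 2$ being as above. In the step, take any frequent OPP $\textbf{r}$ of length $m+1$ and let $\textbf{p}=\textit{prefix}(\textbf{r})$, $\textbf{q}=\textit{suffix}(\textbf{r})$ be its prefix and suffix, which are unique by Definition~\ref{definition2}. By the anti-monotonicity in the proof of Theorem~\ref{theorem0}, \textit{sup}(\textbf{p}) $\geq$ \textit{sup}(\textbf{r}) $\geq$ \textit{minsup} and likewise for $\textbf{q}$, so $\textbf{p}$ and $\textbf{q}$ are frequent OPPs of length $m$ and, by the induction hypothesis, belong to $F_m$ with exact matching sets. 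Because $R(\textit{suffix}(\textbf{p}))=R(\textit{prefix}(\textbf{q}))$, $\textbf{p}$ can fuse with $\textbf{q}$, and by the completeness half of Theorem~\ref{theorem0} the pattern $\textbf{r}$ is one of the outputs of $\textbf{p}\oplus\textbf{q}$. The nested loops of Algorithm~\ref{Algorithm 3} visit the ordered pair $(\textbf{p},\textbf{q})$, so it only remains to verify that the guard $|\mathcal{P}_{\textbf{p}}| \geq \textit{minsup}$ and $|\mathcal{S}_{\textbf{q}}| \geq \textit{minsup}$ still holds when that pair is visited; granting this, SPF-Pro runs, computes $\textit{sup}(\textbf{r}) \geq \textit{minsup}$, and $\textbf{r}$ enters $F_{m+1}$.

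The main obstacle is exactly this last verification: showing that the screening strategy's dynamic shrinking never pushes $|\mathcal{P}_{\textbf{p}}|$ or $|\mathcal{S}_{\textbf{q}}|$ below \textit{minsup} before $(\textbf{p},\textbf{q})$ is processed, so that the pruning of Theorem~\ref{theorempruning} cannot suppress the frequent super-pattern $\textbf{r}$. I would argue it from Theorems~\ref{theorem1} and~\ref{theorem2}: each of the $\textit{sup}(\textbf{r}) \geq \textit{minsup}$ occurrences $<\!x\!>$ of $\textbf{r}$ forces $<\!x-1\!>$ to be an occurrence of $\textbf{p}$, and by Theorem~\ref{theorem1} such an occurrence of $\textbf{p}$ can feed only $\textbf{r}$ among all super-patterns with prefix $\textbf{p}$, so it cannot have been removed from $\mathcal{P}_{\textbf{p}}$ by an earlier fusion $\textbf{p}\oplus\textbf{q}'$ with $\textbf{q}'\neq\textbf{q}$; since $\textbf{p}\oplus\textbf{q}$ itself is carried out exactly once, namely at the visit to $(\textbf{p},\textbf{q})$, all of these occurrences of $\textbf{p}$ are still present in $\mathcal{P}_{\textbf{p}}$ at that moment, so $|\mathcal{P}_{\textbf{p}}| \geq \textit{minsup}$; the bound $|\mathcal{S}_{\textbf{q}}| \geq \textit{minsup}$ follows symmetrically using Theorem~\ref{theorem2}. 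Hence the guard is passed, $\textbf{r}$ is generated and stored, the induction closes, and EFO-Miner is proved both correct and complete.
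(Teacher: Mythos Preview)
Your proposal is correct, and it is considerably more detailed than the paper's own proof. The paper disposes of Theorem~\ref{theoremefo} in essentially one sentence: it observes that EFO-Miner is built from the pattern-fusion, screening, and pruning strategies, and then simply invokes Theorems~\ref{theorem0}, \ref{theorem4}, and~\ref{theorempruning} to conclude. There is no explicit induction, no stated invariant, and no separate treatment of termination, soundness, and completeness.

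What you do differently is to carry an explicit level-by-level induction with the invariant that $F_m$ equals the set of frequent length-$m$ OPPs together with their exact occurrence lists, and---most importantly---you isolate and argue the one genuinely delicate point: that the dynamic shrinking of $\mathcal{P}_{\textbf{p}}$ and $\mathcal{S}_{\textbf{q}}$ never causes the guard $|\mathcal{P}_{\textbf{p}}|\ge\textit{minsup}$, $|\mathcal{S}_{\textbf{q}}|\ge\textit{minsup}$ to fail before a pair $(\textbf{p},\textbf{q})$ that would yield a frequent super-pattern is visited. The paper's proof of Theorem~\ref{theorempruning} implicitly needs this fact (otherwise $|L_{\textbf{r}}|\le|\mathcal{P}_{\textbf{p}}|$ would not bound the true support of $\textbf{r}$), but never spells it out; you do, via the occurrence-uniqueness Theorems~\ref{theorem1} and~\ref{theorem2}. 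Your route therefore buys a genuinely rigorous completeness argument that closes a gap the paper leaves to the reader, at the cost of length; the paper's route buys brevity by treating the three component theorems as a black-box decomposition.
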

\begin{proof}
	{We know that EFO-Miner employs the pattern fusion strategy to generate candidate patterns, the screening strategy to calculate the supports of candidate patterns, and the pruning strategy to further prune candidate patterns. Theorems \ref{theorem0}, \ref{theorem4}, and \ref{theorempruning} show the correctness and completeness of these strategies. Therefore, EFO-Miner is correct and complete.}
\end{proof}

\begin{theorem} \label{theorem5}
	The space and time complexity of EFO-Miner are $O(f \times n)$, where $f$ and $n$ are the number of frequent patterns and the sequence length, respectively.
\end{theorem}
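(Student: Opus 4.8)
The plan is to bound separately the cost of the two phases of EFO-Miner: the initial scan that builds $L_{(1,2)}$ and $L_{(2,1)}$, and the main loop that fuses pairs of frequent patterns. For the space bound, I would first observe that every matching set (and every auxiliary array $\mathcal{P}_{\textbf{p}}$, $\mathcal{S}_{\textbf{p}}$) stores occurrences as single indices $\langle i \rangle$ in a sequence of length $n$, so each such set has size at most $n$. Since the algorithm keeps one matching set $L_{\textbf{p}}$ per frequent pattern $\textbf{p}$ (and at most a constant number of transient arrays per fused pair, which can be released after each fusion), the total storage is $O(f \times n)$, where $f = |F|$. The length-two patterns and the super-patterns themselves are each of length at most $n$, contributing another $O(f \times n)$, so the overall space complexity is $O(f \times n)$.

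For the time bound, the initial scan in Step 1 costs $O(n)$. In the main loop, the work is organized as a sum over fusions $\textbf{p} \oplus \textbf{q}$. I would argue that each individual fusion, implemented by SPF-Pro, costs $O(n)$: the screening procedure of Rules 3 and 4 walks through $\mathcal{P}_{\textbf{p}}$ and $\mathcal{S}_{\textbf{q}}$ once (each of size $\le n$), matching $lq_j = lp_i + 1$, and for Case 2 of Rule 4 it performs an $O(1)$ comparison of $t_{begin}$ and $t_{end}$ per matched occurrence; computing $R(\mathit{prefix}(\textbf{p}))$ and $R(\mathit{suffix}(\textbf{q}))$ to test fusibility also costs at most $O(n)$ since pattern length is at most $n$. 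The key point is to bound the number of fusions. Here I would invoke Theorem \ref{theorem0}: each candidate (super-)pattern is generated exactly once, and each candidate is produced by at most... — more carefully, by the anti-monotonicity established in the proof of Theorem \ref{theorem0}, every frequent pattern of length $m+1$ arises from a fusion of two frequent patterns of length $m$, and the pruning strategy (Theorem \ref{theorempruning}) ensures a fusion is attempted only when both $|\mathcal{P}_{\textbf{p}}| \ge minsup$ and $|\mathcal{S}_{\textbf{q}}| \ge minsup$.

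The main obstacle is precisely this counting of fusions: a naive reading of the doubly-nested \textbf{for} loop over $F_m \times F_m$ suggests $O(f^2)$ fusions and hence $O(f^2 \times n)$ time, not $O(f \times n)$. To get the claimed bound I would argue that the number of \emph{productive} fusions — those that actually generate a candidate — is $O(f)$, because by Theorem \ref{theorem0} each of the at most $f$ frequent patterns (plus the infrequent candidates that are generated and discarded, which I would need to assume are also $O(f)$ in number, or absorb into the analysis) is generated exactly once, so charging the $O(n)$ cost of a successful fusion to the pattern it produces gives $O(f \times n)$ total. The delicate part is the \emph{unproductive} pairs $(\textbf{p},\textbf{q})$ that fail the fusibility test $R(\mathit{suffix}(\textbf{p})) = R(\mathit{prefix}(\textbf{q}))$; I would handle these either by noting they are screened cheaply (e.g. by indexing patterns in $F_m$ by their suffix/prefix OPP so that only genuinely fusible pairs are ever enumerated, reducing the pair enumeration to $O(f)$ overall) or by accepting that the stated complexity implicitly assumes such an indexing. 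I would then conclude by summing: $O(n)$ for the scan plus $O(f \times n)$ for all fusions yields total time $O(f \times n)$, matching the space bound.
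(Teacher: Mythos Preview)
Your space argument is essentially identical to the paper's: each of the $f$ frequent patterns carries $O(n)$-size matching arrays ($L_{\textbf{p}}$, $\mathcal{P}_{\textbf{p}}$, $\mathcal{S}_{\textbf{p}}$), and storing the patterns themselves contributes $O(f \times m) \subseteq O(f \times n)$ (the paper explicitly uses $m \ll n$ here, whereas you bound pattern length by $n$ directly). For time, the paper's proof is considerably terser than yours: it simply asserts that computing the matching results for each pattern costs $O(n)$, that there are $f$ patterns, and therefore the total is $O(f \times n)$ --- it never mentions the doubly-nested loop over $F_m \times F_m$, the cost of infrequent candidates, or the cost of testing fusibility on unproductive pairs. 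The obstacle you identify (that a naive reading of Algorithm~\ref{Algorithm 3} gives $O(f^2)$ pair enumerations) is genuine and the paper's proof does not address it; your proposed fix --- charging each $O(n)$ fusion to the unique candidate it produces via Theorem~\ref{theorem0}, and assuming an indexing of $F_m$ by prefix/suffix OPP so that only fusible pairs are ever enumerated --- is precisely the argument one needs to make the stated bound rigorous. In short: same high-level approach, but your analysis is strictly more careful than the paper's own proof.
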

\begin{proof}
	The space complexity of EFO-Miner involves two parts: the frequent patterns and the matching results. Since the number of frequent patterns is f, the space complexity of frequent patterns is $O(f \times m)$, where m is the length of the longest pattern. For each pattern \textbf{p}, the space complexity of the matching results is $O(n)$. Similarly, the space complexities of $\mathcal{P}_{\textbf{p}}$ and $\mathcal{S}_{\textbf{p}}$ are also O(n). Since there are $f$ frequent patterns, the space complexity of the matching results is $O(f \times n)$. Since $m$ is far less than $n$, the space complexity of EFO-Miner is $O(f \times (m+n))$ = $O(f \times n)$. 	The time complexity of calculating the matching results for each pattern is $O(n)$. There are $f$ patterns. Therefore, the time complexity of EFO-Miner is $O(f \times n)$.
\end{proof}

\subsection{Mining strong OPRs }\label{sub4.5}
In this section, we explore the use of OPR-Miner to mine strong OPRs from all frequent patterns using EFO-Miner.

A simple method is that we  enumerate all OPRs according to Definition \ref{definition6} and calculate their confidences. If the confidence is no less than the threshold, then the rule is a strong OPR. Obviously, this method is not efficient. 

According to Algorithm 	\ref{Algorithm 3}, we know that pattern {\textbf{p}} is the prefix OPP of patterns {\textbf{r}} and {\textbf{h}}. Therefore, we can discover the strong OPRs in the process of mining frequent OPPs. It means that if the support of pattern {\textbf{r}} is no less than $minsup/minconf$, then $\textbf{p} \to \textbf{r}$ is a strong OPR. Similarly, $\textbf{p} \to \textbf{h}$ is a strong OPR. More importantly, this method has the same time and space complexities as those of EFO-Miner. Pseudocode for OPR-Miner is shown in Algorithm \ref{Algorithm 4}.

\begin{algorithm}[htb]
	\caption{OPR-Miner}	\label{Algorithm 4}
	
	{\bf Input:} Time series \textbf{t}, frequent pattern set \textit{F}, support of each frequent pattern \textit{sup}, and the minimum confidence threshold \textit{minconf}
	{\bf Output:} Strong OPR set \textit{R}
	
	\begin{algorithmic}[1]
		
		\If {$L_{\textbf{r}}.size/L_{\textbf{p}}.size\geq minconf$}  // Add these codes after Line 17 in Algorithm 	\ref{Algorithm 3}.
		\State Add rule $\textbf{p} \to \textbf{r}$ to \textit{R};
		\EndIf
		\If {$L_{\textbf{h}}.size/L_{\textbf{p}}.size\geq minconf $ }
		\State Add rule $\textbf{p} \to \textbf{r}$ to \textit{R};
		\EndIf
		
		%\State Add rule \textbf{p}$\to$\textbf{q} to \textit{R};
		%	\RETURN \textit{R};
	\end{algorithmic}
\end{algorithm}

{According to Algorithm \ref{Algorithm 4}, we know that OPR-Miner does not employ any strategy, only uses Definitions \ref{definition7} and \ref{definition8} to discover strong OPRs based on EFO-Miner. Theorem \ref{theoremefo} shows that EFO-Miner is correct and complete. Therefore, OPR-Miner is also correct and complete. }

{Moreover, Example 9 illustrates the difference between all OPRs and strong OPRs.}

\begin{example}\label{example9}
	{	This example uses the frequent OPPs in Example \ref{example8}. We know that (1,2) and (1,3,2) are two frequent patterns, where (1,2) is the prefix pattern of (1,3,2). According to Definition \ref{definition6}, (1,2)$\to$(1,3,2) is an OPR. Similarly, we find all OPRs: (1,2)$\to$(1,3,2), (2,1)$\to$(2,1,3), (1,3,2)$\to$(1,3,2,4), and (2,1,3)$\to$(3,1,4,2).}
	
	{However, according to Definition \ref{definition7}, the confidence of rule (2,1)$\to$(2,1,3) is \textit{conf}((2,1)$\to$(2,1,3)) = \textit{sup}(2,1,3)$ $/$ $\textit{sup}(2,1) = 4$/$8 = 0.5. Since rules with low confidence have no practical meaning in most applications, we only discover the strong OPRs, that is, those for which the confidence level is higher than a certain threshold. For example, suppose the minimum confidence threshold minconf is 0.7. Thus, rule (2,1)$\to$(2,1,3) is not a strong OPR, since its confidence is 0.5. We know that \textit{sup}((1,2)) = 7 and \textit{sup}(1,3,2) = 6. Hence, the confidence of rule (1,2)$\to$(1,3,2) is \textit{conf}((1,2)$\to$(1,3,2)) = 6/7, which is greater than \textit{minconf}, and rule (1,2)$\to$(1,3,2) is therefore a strong OPR. Similarly, we get the strong OPR set \textit{R} = \{(1,2)$\to$(1,3,2), (2,1,3)$\to$(3,1,4,2)\}.}
\end{example}
{This example shows that the number of strong OPRs is less than that of all OPRs, since the confidence of a strong OPR is no less than \textit{minconf}, while the general OPRs do not have such constraints.   }

\section{Experimental results and analysis}\label{section5}

Section \ref{sub5.1} introduces the benchmark datasets and the baseline methods.  Section \ref{sub5.2} validates the running  performance of EFO-Miner. Section \ref{scalabilty} shows the scalabilty of EFO-Miner. Section \ref{minsup} reports the influence of different \textit{minsup}. Section \ref{sub5.3} verifies the performance of OPR-Miner. Section \ref{subconf} shows the the influence of different \textit{minconf}. Section \ref{sub5.4} demonstrates the advantages of OPR patterns.

\subsection{Benchmark datasets and baseline methods}\label{sub5.1}

We use real stock, weather, and oil datasets as test sequences. The stock and oil datasets can be downloaded from https://www.yahoo.com/, the weather datasets can be downloaded from https://archive.ics.uci.edu/ml/datasets.php/, the daily new cases datasets can be downloaded from https://coronavirus.jhu.edu/, the sensor and spectro datasets can be downloaded from http://www.timeseriesclassification.com/index.php/, and the diagnosis fault datasets can be  downloaded from http://jzw.ie.tsinghua.edu.cn/Show/index/cid/45/id/1568.html/. A specific description of each dataset is given in Table \ref{tab4}.

\begin{table}[!htb]
	%\renewcommand{\arraystretch}{1.1}	
	%\scriptsize
	\centering
	\setlength{\belowcaptionskip}{0.1cm}
	\caption{Description of datasets}
	\label{tab4}
	\tabcolsep 1pt %space between two columns. 
	\begin{tabular}{cccccc}
		%\toprule
		\hline\noalign{\smallskip}
		Name & Dataset & Type & Total length &  Number of sequences &  Number of labels\\\hline
		%\midrule
		SDB1 & Italian-temperature & Weather & 256 & 1 & /\\
		SDB2 & Italian-temperature & Weather & 1,233 & 1 & /\\
		SDB3 & 1WTl-2 & Oil & 2,496 & 1 & /\\
		SDB4 & Crude Oil & Oil & 4,954 & 1 & /\\
		SDB5 & Russell 2000 & Stock & 8,141 & 1 & /\\
		SDB6 & Nasdaq & Stock & 12,279 & 1 & /\\
		SDB7 & S\&P 500 & Stock & 23,046 & 1 & /\\
		SDB8 & PRSA\_Data\_Nongzhanguan & Weather & 34,436 & 1 & /\\
		SDB9 & CSSE COVID\-19 Dataset & Daily new cases & 2,715 & 15 &15\\
		SDB10 & Car & Sensor & 8,655 & 15 & 4\\
		SDB11 & Meat & Spectro & 6,345 & 15 & 3\\
		SDB12 & Beef & Spectro & 7,050 & 15 & 5\\
		SDB13 & Bearing fault-NR & Diagnosis fault & 46,024 & 44 & 2\\
		SDB14 & Bearing fault-NI & Diagnosis fault & 46,024 & 44 & 2\\
		SDB15 & Bearing fault-NO & Diagnosis fault & 46,024 & 44 & 2\\	
		SDB16  &  New York Stock Exchange   &  Stock  &  60,000  & 1 & /\\
		%		\bottomrule
		\hline\noalign{\smallskip}
	\end{tabular}
	\begin{tablenotes}
		\item Note: SDB13-SDB15 are part of the sequences selected from the bearing fault dataset, which records the bearing fault vibration signals. There are four bearing State labels representing different States. Normal, Inner, Outer, and Roller. Among them, 22 Normal and 22 Roller sequences are extracted from SDB13, 22 Normal and 22 Inner sequences are extracted from SDB14, and 22 Normal and 22 outer sequences are extracted from SDB15. 
	\end{tablenotes}
\end{table}

All experiments were run on a computer with Intel(R) Core(TM) i5-3230U, 1.60 GHz CPU, 8.0 GB RAM, and a Win10 64-bit operating system, and the compilation environment was Dev C++ 5.4.0.

This paper proposes EFO-Miner and OPR-Miner to mine frequent OPPs and strong OPRs, respectively. OPR-Miner adds two branch statements on the basis of EFO-Miner, which hardly takes time. Therefore, we only validate the running performance of EFO-Miner, since the running performance of OPR-Miner is almost the same as EFO-Miner. Moreover, we verify the usefulness of strong OPRs mined by OPR-Miner.

%To verify the performance of EFO-Miner and OPR-Miner, six competitive algorithms of two types were selected.

\textbf{For EFO-Miner:}

%\begin{enumerate} 
%\item
1) Mat-Based: {To verify the efficiency of EFO-Miner, we developed Mat-Based which employs the pattern fusion strategy to generate candidate patterns and adopts an OPP matching algorithm proposed in \cite{kim2014orde} to calculate the support for each candidate pattern.}

%{\cite{wu2021orde}}
2) OPP-Miner {\cite{wu2021orde}}: To validate the efficiency of EFO-Miner, we selected OPP-Miner as a competitive algorithm. OPP-Miner adopts a pattern matching strategy to calculate the support and needs to scan the sequence numerous times.

3) EFO-enum: To test the performance of the pattern fusion strategy in terms of generating super-patterns, we developed EFO-enum, which employs an enumeration strategy to generate super-patterns and SFP to calculate the support. 

4) EFO-scrn: To verify the effect of the screening strategy on the calculation of supports, we developed EFO-scrn, which does not apply the screening strategy. Since the pruning strategy is based on the screening strategy, EFO-scrn employs neither pruning strategy nor screening strategy, and instead adopts pattern fusion to generate candidate patterns and SFP to calculate the support.

5) EFO-prun: To validate the performance of the pruning strategy, we proposed EFO-prun, which does not apply the pruning strategy, and instead adopts pattern fusion to generate candidate patterns and SFP-Pro to calculate the support.
%\end{enumerate}

\textbf{For OPR-Miner:}
%\begin{enumerate}
%	\item  [6)] 
6) OPR-Rule: To report the confidences of the strong rules mined by OPR-Miner, we explored OPR-Rule to generate all OPRs based on all frequent OPPs.
%\end{enumerate}

\subsection{Performance of EFO-Miner}\label{sub5.2}

{To validate the performance of EFO-Miner, we used five competitive algorithms: Mat-Based, OPP-Miner, EFO-enum, EFO-scrn, and EFO-prun.} We performed experiments on the SDB1–SDB8 datasets, and set the minimum support threshold \textit{minsup} = 12. Since all six algorithms are complete, the mining results are the same, i.e., there are 17, 72, 160, 297, 497, 741, 1162, and 1023 frequent patterns for SDB1–SDB8, respectively. Comparisons of the running time and numbers of candidate patterns are shown in Figs. \ref{figure4} and \ref{figure5}, respectively. We also show a comparison of the numbers of elements in the prefix and suffix arrays in Fig. \ref{figure6} (this figure does not include both Mat-Based and OPP-Miner, since the two algorithms do not use prefix and suffix arrays to calculate the support).

\begin{figure}[!htb]
	\centering
	\includegraphics[width=\linewidth]{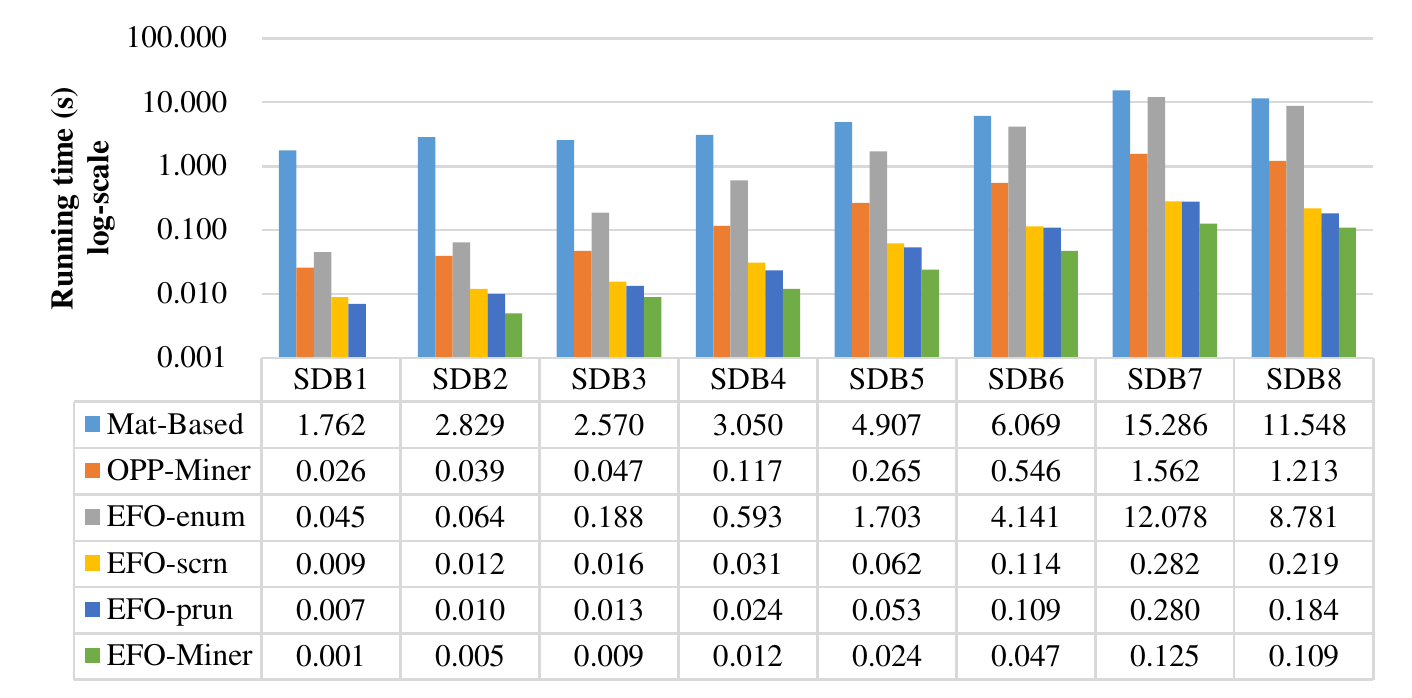}
	\caption{Comparison of running time on SDB1–SDB8}
	\label{figure4}		
\end{figure}

\begin{figure}[!htb]
	\centering
	\includegraphics[width=\linewidth]{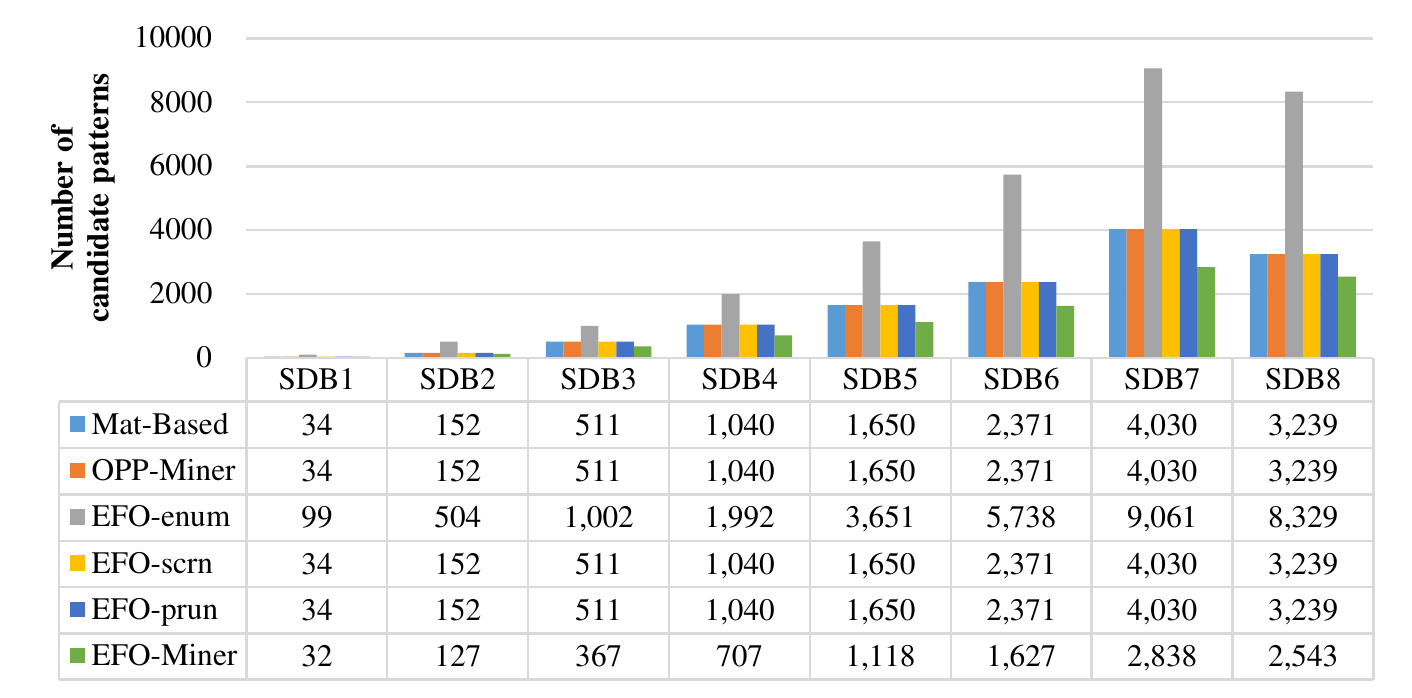}
	\caption{Comparison of numbers of candidate patterns for SDB1–SDB8}
	\label{figure5}		
\end{figure}

\begin{figure}[!htb]
	\centering
	\includegraphics[width=\linewidth]{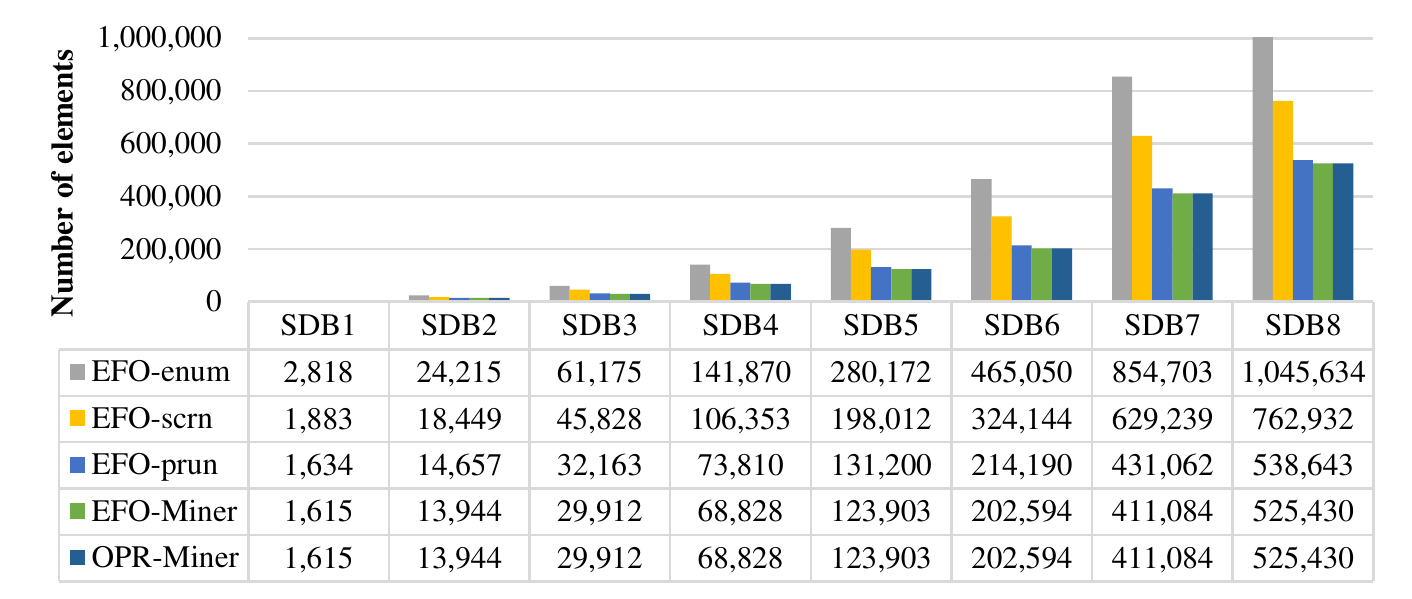}
	\caption{Comparison of numbers of elements in prefix and suffix arrays for SDB1–SDB8}
	\label{figure6}		
\end{figure}

The results give rise to the following observations.

%\begin{enumerate}
1) EFO-Miner gives better performance than both Mat-Based and OPP-Miner, since EFO-Miner not only runs faster than the two algorithms, but also checks fewer candidate patterns. For example, on SDB7, Fig. \ref{figure4} shows that EFO-Miner takes 625 ms, while Mat-Based and OPP-Miner take 10,421 and 1,266 ms, respectively; Fig. \ref{figure5} shows that EFO-Miner checks 2,838 candidate patterns, while Mat-Based and OPP-Miner check both 4,030. The same effect can be found on all the other datasets. The reasons for this are as follows.  Mat-Based and OPP-Miner employ different pattern matching strategies that cannot use the results for the sub-patterns and has to scan the database repeatedly, which is inefficient. In contrast, EFO-Miner uses the results for the sub-patterns to calculate the occurrences of super-patterns, which can avoid redundant calculations and improve the efficiency. Moreover, although EFO-Miner, Mat-Based, and  OPP-Miner adopt a pattern fusion strategy to generate candidate patterns, EFO-Miner employs a pruning strategy that can further reduce the number of candidate patterns. Hence, EFO-Miner checks fewer candidate patterns than both Mat-Based and OPP-Miner, and therefore outperforms them.

2) EFO-Miner outperforms EFO-enum, thus demonstrating that the pattern fusion strategy can efficiently prune candidate patterns. Fig. \ref{figure4} shows that EFO-Miner runs faster than EFO-enum. For example, on SDB4, EFO-Miner takes 65.6 ms, while EFO-enum takes 2,594 ms. The same effect can be found on all the other datasets. The reason for this is that the pattern fusion strategy can effectively reduce the number of candidate patterns. For example, Fig. \ref{figure5} shows that on SDB4, EFO-Miner generates 707 candidate patterns, while EFO-enum generates 1,992. From Fig. \ref{figure6}, we can see that on SDB4, EFO-Miner carries out 68,828 comparisons between elements, while for EFO-enum it is 141,870. The experimental results are therefore consistent with those in Example \ref{example4}. We know that the lower the number of candidate patterns, the faster the algorithm runs. Hence, EFO-Miner runs faster than EFO-enum.

3) EFO-Miner outperforms EFO-prun, which indicates that the pruning strategy can efficiently reduce the number of candidate patterns. Fig. \ref{figure4} shows that EFO-Miner runs faster than EFO-prun. For example, on SDB5, EFO-Miner takes 125 ms, while EFO-prun takes 234 ms, and the same effect can be found on the other datasets. The reason for this is that the pruning strategy can effectively reduce the number of candidate patterns. For example, from Fig. \ref{figure5}, we can see that EFO-Miner checks 1,118 candidate patterns for SDB5, while EFO-prun checks 1,650. From Figure \ref{figure6}, we see that on SDB5, EFO-Miner carries out 123,903 comparisons between elements, while for EFO-scrn it is 131,200. With a reduction in the number of candidate patterns, the number of comparisons is also reduced. The experimental results are therefore consistent with those in Example \ref{example7}. We know that the lower the number of candidate patterns, the faster the algorithm runs. EFO-Miner therefore runs faster than EFO-prun. 

4) EFO-Miner outperforms EFO-scrn. More importantly, EFO-prun outperforms EFO-scrn, which indicates that the screening strategy can efficiently improve the mining performance. Fig. \ref{figure4} shows that EFO-prun runs faster than EFO-scrn. For example, on SDB3, EFO-prun takes 37.3 ms, while EFO-scrn takes 44.8 ms, and the same effect can be found on all the other datasets. The reason for this is that the screening strategy can dynamically reduce the size of the prefix and suffix arrays. For example, Fig. \ref{figure6} shows that on SDB3, EFO-prun carries out 32,163 comparisons between elements, while EFO-scrn carries out 45,828. The experimental results are therefore consistent with those in Example \ref{example6}. The lower the sizes of the prefix and suffix arrays, the faster the algorithm runs, meaning that EFO-prun runs faster than EFO-scrn. We know that EFO-Miner runs faster than EFO-prun. Hence, EFO-Miner runs faster than EFO-scrn.

%\end{enumerate}

\subsection{{Scalability}}\label{scalabilty}

{In this section, to evaluate the scalability of EFO-Miner, we employed Mat-Based, EFO-enum, EFO-scrn, and EFO-prun as competitive algorithms. Moreover, we selected SDB8 as the experimental dataset, and created SDB8\_1, SDB8\_2, SDB8\_3, SDB8\_4, SDB8\_5, and SDB8\_6, which are one, two, three, four, five, and six times the size of SDB8, respectively. Obviously, if minsup is a constant, the longer the sequence, the more frequent patterns. The running time is positive related with number of frequent patterns and the sequence length according to Theorem 7. To avoid the impact of the different number of patterns on the running time, we set $minsup$=10, 20, 30, 40, 50, and 60 on SDB8\_1-SDB8\_6. All these algorithms mine 1243 patterns, and the comparison of running time is shown in Fig. \ref{Scal}.}

\begin{figure}[!htb]
	\centering
	\includegraphics[width=\linewidth]{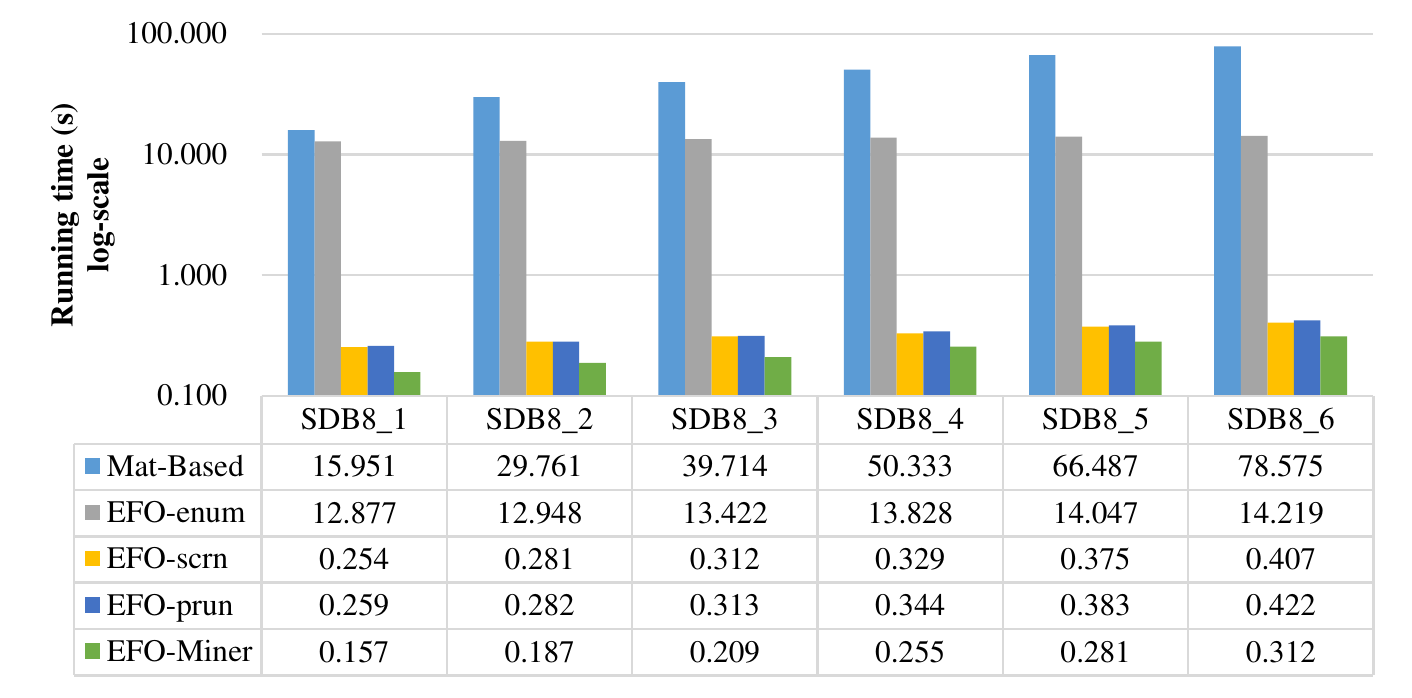}
	\caption{{Comparison of running time with different dataset sizes }}
	\label{Scal}		
\end{figure}

The results give rise to the following observations. From Fig. \ref{Scal}, we know that the running time of EFO-Miner grows slower than the dataset size. For example, the size of SDB8\_6 is six times of SDB8\_1, while EFO-Miner takes 0.331s on SDB8\_6, which is 0.331/0.147=2.252 times of SDB8\_1. This phenomenon can be found in all other datasets. The results indicate that the running time is positively correlated with the dataset size, which is consistent with the analysis of time complexity of EFO-Miner. More importantly, EFO-Miner runs significantly faster than other competitive algorithms, such as Mat-Based, EFO-enum, EFO-scrn, and EFO-prun. The reason is the same as the analysis in Section \ref{sub5.2}. Hence, EFO-Miner has strong scalability, since the mining performance does not degrade as the dataset size increases.

\subsection {{Influence of different \textit{minsup}}}\label {minsup} 

{In this section, to report the influence of different \textit{minsup} on number of patterns and running time of EFO-Miner, we selected Mat-Based, EFO-enum, EFO-scrn, and EFO-prun as competitive algorithms, and selected dataset SDB16 and expanded it by 10 times to obtain a larger dataset as the experimental dataset. We set \textit{minsup}=1600, 1700, 1800, 1900, 2000, and 2100, respectively. The comparison of number of patterns and running time on SDB16 are shown in Figs. \ref {sup_pattern} and \ref{sup_time}, respectively.}
%\textit{minconf}=0.45 and 

\begin{figure}[!htb]
	\centering
	\includegraphics[width=\linewidth]{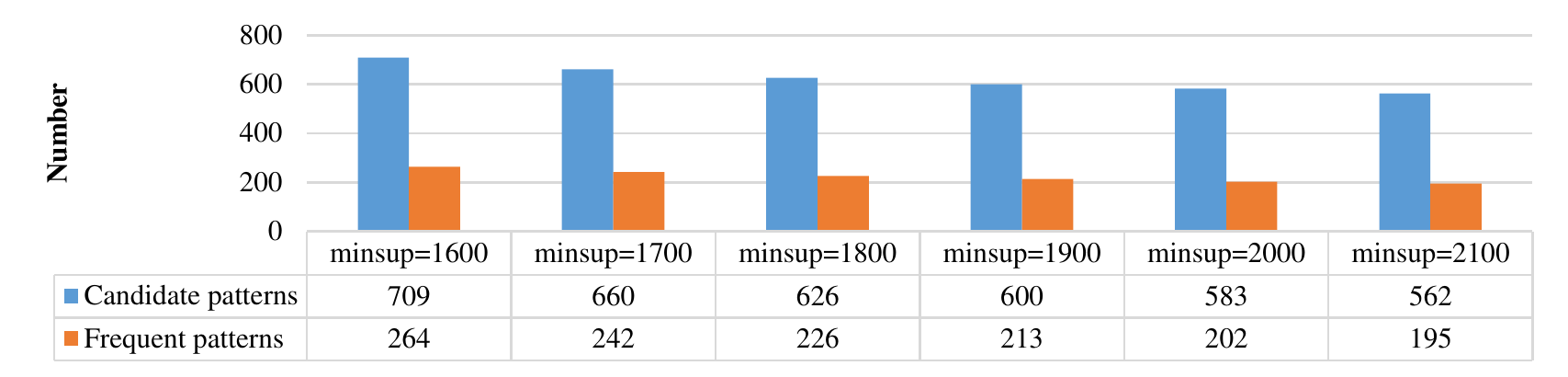}
	\caption{{Comparison of number of patterns with different \textit{minsup} on SDB16 }}
	\label{sup_pattern}		
\end{figure}

\begin{figure}[!htb]
	\centering
	\includegraphics[width=\linewidth]{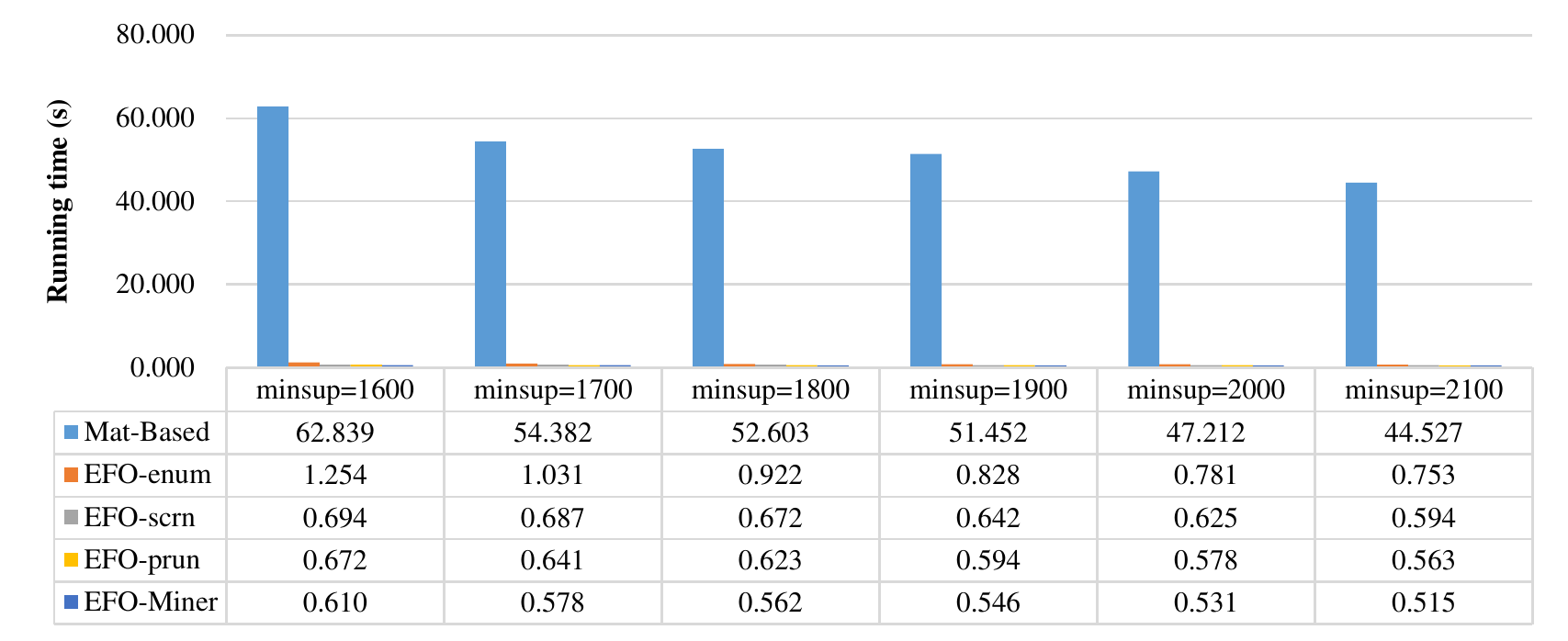}
	\caption{{Comparison of running time with different \textit{minsup} on SDB16 }}
	\label{sup_time}		
\end{figure}

{The results give rise to the following observations. With the increase of \textit{minsup}, the number of patterns and running time decreases. For example, from Figs. \ref {sup_pattern} and  \ref{sup_time}, when \textit{minsup}=1600, EFO-Miner discovers 264 OPPs and takes 0.610s, whereas when \textit{minsup}=2100, EFO-Miner discovers 195 OPPs and takes 0.515s. This phenomenon can also be found in other competitive algorithms. The reason for this is as follows. With the increase of \textit{minsup} value, the number of frequent patterns decreases. As a result, the running time also decreases.   Moreover, EFO-Miner outperforms other competitive algorithms, which is consistent with the results of Section \ref{sub5.2}.}

\subsection{Performance of OPR-Miner}\label{sub5.3}

In this case, OPR-Rule was selected as a comparison algorithm to generate all the OPRs, and experiments were carried out on SDB1–SDB8. We set the minimum support threshold \textit{minsup}=12 and the minimum confidence threshold \textit{minconf}=0.45. The number of generated rules is shown in Fig. \ref{figure8}. Moreover, Fig. \ref{figure9} shows the comparison of the confidences of OPRs and strong OPRs for SDB3.

\begin{figure}[!htb]
	\centering
	\includegraphics[width=\linewidth]{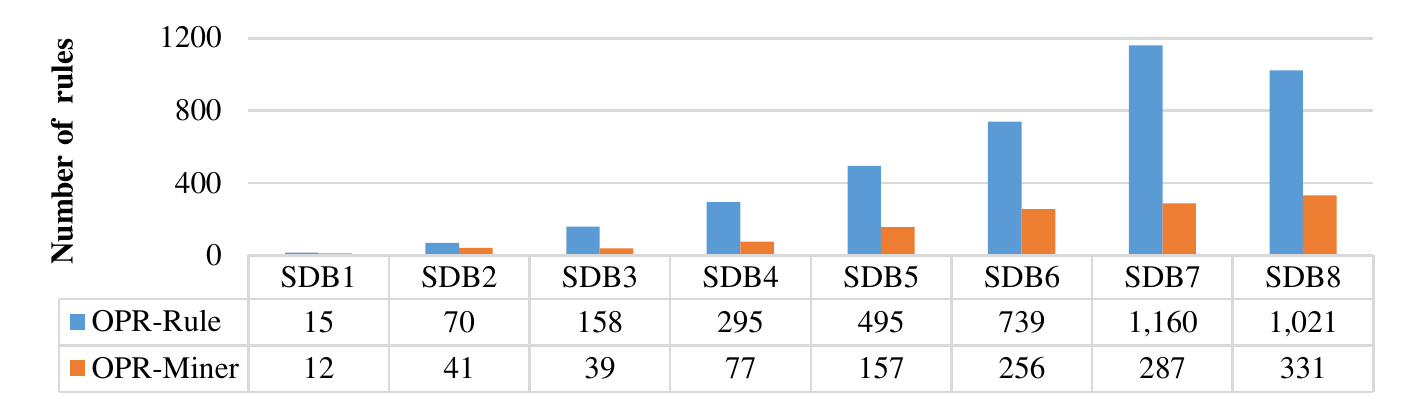}
	\caption{Comparison of number of rules on SDB1–SDB8}
	\label{figure8}		
\end{figure}

\begin{figure}[!htb]
	\centering
	\includegraphics[width=\linewidth]{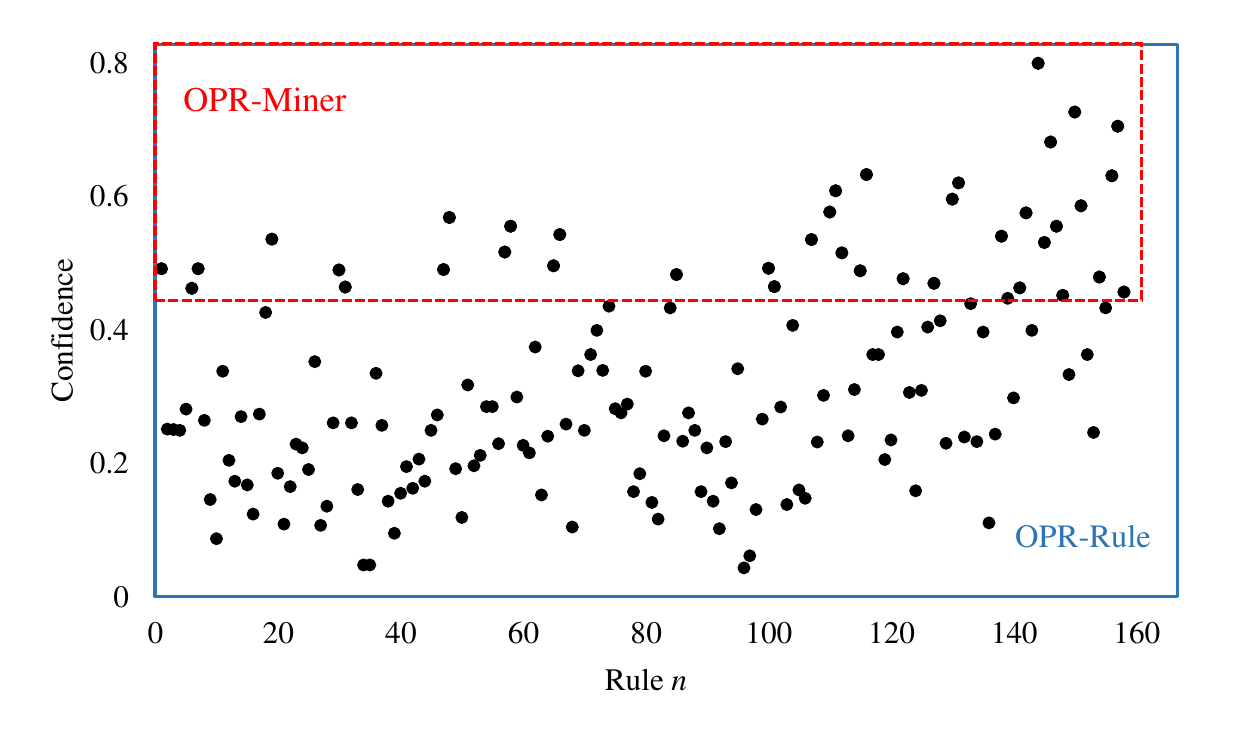}
	\caption{Comparison of confidences of OPRs and strong OPRs for SDB3. A point represents a rule, the X-axis represents the $n$-th rule, and the Y-axis represents the confidence level of the rule. }
	\label{figure9}		
\end{figure}

The results indicate that OPR-Miner outperforms OPR-Rule, thus validating that OPR-Miner can efficiently prune rules.  For example, from Fig. \ref{figure8}, we know that on SDB3, OPR-Miner generates 39 candidate patterns, while OPR-Rule generates 158. Our experimental results are therefore consistent with those in Example \ref{example9}. Moreover, Fig. \ref{figure9} shows that the mined rules of OPR-Miner are a part of OPR-Rule. More importantly, OPR-Miner can mine rules with high confidences. Since we set the minimum confidence threshold \textit{minconf} = 0.45, the confidences of the OPRs mined by OPR-Miner are no less than 0.45, while some of the confidences of OPRs mined by OPR-Rule are less than 0.45. Hence, OPR-Miner can find more useful rules than OPR-Rule.

\subsection{{Influence of different \textit{minconf}}}\label{subconf} 

{To report the influence of different \textit {minconf} on the number of patterns and running time of OPR-Miner, we also selected dataset SDB16 and expanded it by 10 times to obtain a larger dataset as the experimental dataset. We selected OPR-Rule as the competitive algorithm. We set \textit {minsup}=1800 and \textit {minconf}=0.40, 0.45, 0.50, 0.55, 0.60, and 0.65, respectively. The running time of OPR-Miner and OPR-Rule on all \textit {minconf} is all about 0.563s, and the comparison of number of strong OPRs with different \textit {minconf} is shown in Fig. \ref{differentconf}.}

\begin{figure}[!htb]
	\centering
	\includegraphics[width=\linewidth]{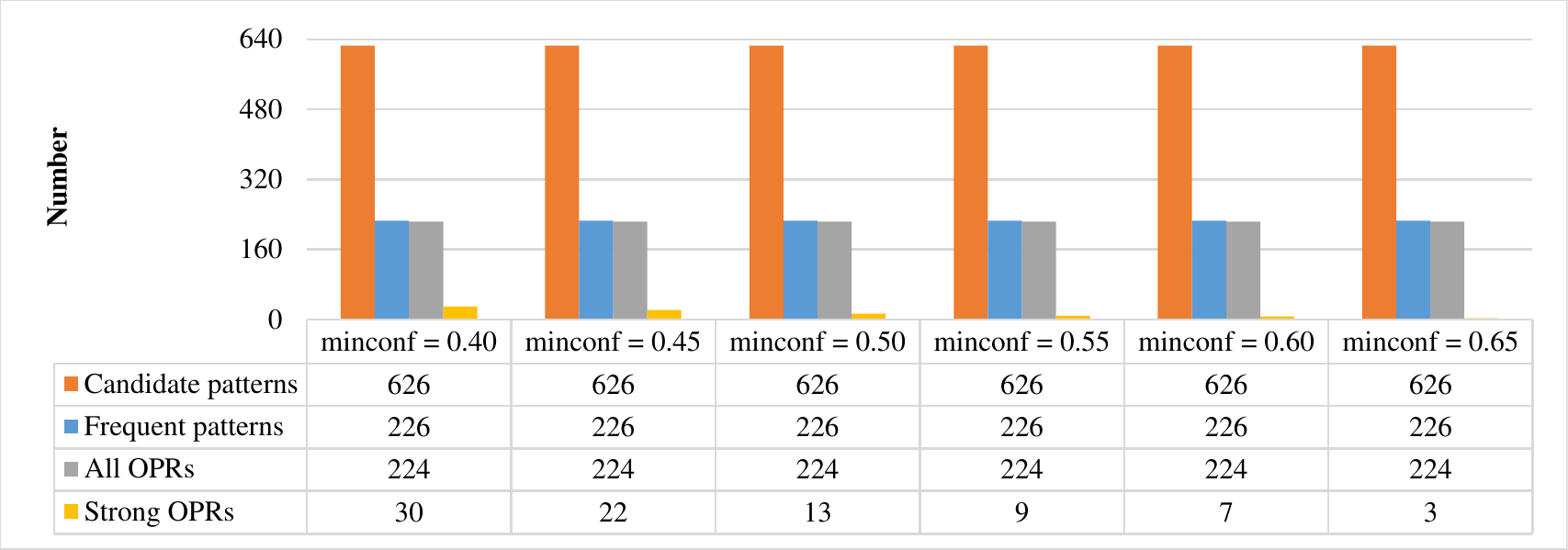}
	\caption{Comparison of number of strong OPRs with different \textit {minconf}}
	\label{differentconf}		
\end{figure}

The results give rise to the following observations. The running time of OPR-Miner and OPR-Rule are almost the same, since OPR-Miner discovers a subset of OPR-Rule, and the process requires almost no time. Moreover, with the increase of \textit {minconf}, the number of candidate patterns, frequent patterns, and all OPRs are constant, while the number of strong OPRs decreases. For example, no matter what \textit {minconf} is, the number of candidate patterns, frequent patterns, and all OPRs are 626, 226, and 224, respectively. However, when \textit {minconf}=0.40, the number of strong OPRs is 30, while when \textit {minconf}=0.65, the number of strong OPRs is 3. The reason is as follows. We know that the number of candidate patterns, frequent patterns, and all OPRs are not related with the value of \textit {minconf}. Thus, with the increase of \textit {minconf}, the number of candidate patterns, frequent patterns, and all OPRs are constant. However, the number of strong OPRs is related with the value of \textit {minconf}. With the increase of \textit {minconf}, the number of strong OPRs decreases.

\subsection{Case studies}\label{sub5.4}

In this section, we report the applications of rule mining from two aspects: clustering and classification. To evaluate the performance of OPR-Miner, we selected OPP-Miner as the competitive model. We know that each OPR can be represented by \textbf{x}$\to$\textbf{y}, where \textbf{x} and \textbf{y} are two frequent OPPs. Thus, each OPR is composed by two OPPs. If we mine \textit{t} OPRs, then there will be \textit{k} different OPPs, where \textit{k}$\leq2\times$ \textit{t}, since some OPPs may be the same. For example, (1,2,3)$\to$(1,2,3,4) and (1,2,3,4)$\to$(1,2,3,4,5) are two different OPRs. However, there are only three different OPPs: (1,2,3), (1,2,3,4), and (1,2,3,4,5). We use \textit{k} supports of corresponding OPPs to form a new dataset as OPRs. For fairness, we also selected top-\textit{k} supports of OPPs to form a new dataset as OPPs. The original dataset is called Raw.

\subsubsection{Clustering performance} 

To validate the clustering performance, a clustering experiment is conducted in this section. Since SDB1-SDB8 are single sequence datasets, clustering experiment cannot be conducted. Thus, we selected SDB9-SDB12 to conduct the experiment according to the following steps.

%\begin{enumerate}
1. We employ OPR-Miner to mine OPRs and the parameters are \textit{minsup} = 25 and \textit{minconf} = 0.65. We discover 6 OPRs corresponding to 8 OPPs on SDB9, 8 OPRs corresponding to 12 OPPs on SDB10, 8 OPRs corresponding to 12 OPPs on SDB11, and 8 OPRs corresponding to 11 OPPs on SDB12. Therefore, we discover top-8, top-12, top-12, and top-11 OPPs on SDB9, SDB10, SDB11, and SDB12, respectively. {We show the comparison of the mined OPRs, their corresponding OPPs, top-$k$ OPPs, and shared OPPs in Table \ref {tableopp}.}

\begin{table}
	\renewcommand{\arraystretch}{1.15}
	\tiny
	\caption{{Comparison of mined patterns}}
	\label{tableopp}
	\centering 
	\begin{threeparttable}
		%\resizebox{\textwidth}{10mm}
		\resizebox{\linewidth}{!}{
			\setlength{\tabcolsep}{0.65mm}{
				\begin{tabular}{|c|c|c|c|}
					\hline
					Dataset & Type 	&  Number & 	Mined OPRs or OPPs \\ \hline
					\multirow{5}*{SDB9} & \multirow{2}*{Strong OPRs} & \multirow{2}*6 & (1,2)$\to$(1,2,3), (2,1)$\to$(3,2,1), (1,2,3)$\to$(1,2,3,4),\\&&& (3,2,1)$\to$(4,3,2,1), (1,2,3,4)$\to$(1,2,3,4,5), (1,2,3,4,5)$\to$(1,2,3,4,5,6)\\ \cline{2-4}
					~ & Corresponding OPPs &8 & (1,2), (2,1), (1,2,3), (3,2,1), (1,2,3,4), (4,3,2,1), (1,2,3,4,5), (1,2,3,4,5,6) \\ \cline{2-4}
					~ & Top-$k$ OPPs & 8 & (1,2), (1,2,3), (2,1), (1,2,3,4), (1,2,3,4,5), (3,2,1), (1,2,3,4,5,6), (1,2,3,4,5,6,7) \\ \cline{2-4}
					~ & Shared OPPs    & 7 &(1,2), (2,1), (1,2,3), (3,2,1), (1,2,3,4), (1,2,3,4,5), (1,2,3,4,5,6)\\ 
					\hline
					
					\multirow{9}*{SDB10} &&& (1,2,3)$\to$(1,2,3,4), (3,2,1)$\to$(4,3,2,1), (1,2,3,4)$\to$(1,2,3,4,5), 
					\\& Strong OPRs &8 &(4,3,2,1)$\to$(5,4,3,2,1), (1,2,3,4,5)$\to$(1,2,3,4,5,6), (5,4,3,2,1)$\to$(6,5,4,3,2,1), 
					\\&&&(1,2,3,4,5,6)$\to$(1,2,3,4,5,6,7), (6,5,4,3,2,1)$\to$(7,6,5,4,3,2,1) \\ \cline{2-4} 
					& \multirow{2}*{Corresponding OPPs} & \multirow{2}*{12} & (1,2), (2,1), (1,2,3), (3,2,1), (1,2,3,4), (4,3,2,1), (1,2,3,4,5), 
					\\&&&(5,4,3,2,1), (1,2,3,4,5,6), (6,5,4,3,2,1), (1,2,3,4,5,6,7), (7,6,5,4,3,2,1)\\ \cline{2-4} 
					& \multirow{2}*{Top-$k$ OPPs} & \multirow{2}*{12} & (2,1), (3,2,1), (1,2), (4,3,2,1), (1,2,3), (5,4,3,2,1), (6,5,4,3,2,1), (1,2,3,4),
					\\&&&(7,6,5,4,3,2,1), (1,2,3,4,5), (8,7,6,5,4,3,2,1), (9,8,7,6,5,4,3,2,1) \\ \cline{2-4} 
					& \multirow{2}*{Shared OPPs}   &\multirow{2}*{10}& (2,1), (3,2,1), (1,2), (4,3,2,1), (1,2,3), (5,4,3,2,1), \\&&&(6,5,4,3,2,1), (1,2,3,4), (7,6,5,4,3,2,1), (1,2,3,4,5)  \\ \hline			
					
					\multirow{9}*{SDB11} & & & (1,2,3)$\to$(1,2,3,4), (3,2,1)$\to$(4,3,2,1), (1,2,3,4)$\to$(1,2,3,4,5), 
					\\&Strong OPRs & 8&(4,3,2,1)$\to$(5,4,3,2,1), (1,2,3,4,5)$\to$(1,2,3,4,5,6), (5,4,3,2,1)$\to$(6,5,4,3,2,1), 
					\\&&&(1,2,3,4,5,6)$\to$(1,2,3,4,5,6,7), (6,5,4,3,2,1)$\to$(7,6,5,4,3,2,1)\\ \cline{2-4}
					~ & \multirow{2}*{Corresponding OPPs} &\multirow{2}*{12} & (1,2,3), (1,3,2), (2,1,3), (3,2,1), (1,2,3,4), (4,3,2,1), (1,2,3,4,5),\\&&& (5,4,3,2,1), (1,2,3,4,5,6), (6,5,4,3,2,1), (1,2,3,4,5,6,7), (7,6,5,4,3,2,1)\\ \cline{2-4}
					~ & & & (2,1), (3,2,1), (4,3,2,1), (5,4,3,2,1), (6,5,4,3,2,1), (7,6,5,4,3,2,1),
					\\&Top-$k$ OPPs & 12& (8,7,6,5,4,3,2,1), (9,8,7,6,5,4,3,2,1), (10,9,8,7,6,5,4,3,2,1), 
					\\&&&(11,10,9,8,7,6,5,4,3,2,1), (12,11,10,9,8,7,6,5,4,3,2,1), (1,2) \\ \cline{2-4}
					~ & Shared OPPs    & 5 & (3,2,1), (4,3,2,1), (5,4,3,2,1), (6,5,4,3,2,1), (7,6,5,4,3,2,1)\\ 
					\hline
					
					\multirow{9}*{SDB12} & & &  (2,1)$\to$(3,2,1), (1,2,3)$\to$(1,2,3,4), (3,2,1)$\to$(4,3,2,1), 
					\\&Strong OPRs &8&(1,2,3,4)$\to$(1,2,3,4,5), (4,3,2,1)$\to$(5,4,3,2,1), (1,2,3,4,5)$\to$(1,2,3,4,5,6),
					\\&&& (5,4,3,2,1)$\to$(6,5,4,3,2,1), (1,2,3,4,5,6)$\to$(1,2,3,4,5,6,7)\\ \cline{2-4} 
					& \multirow{2}*{Corresponding OPPs} & \multirow{2}*{11} & (1,2), (2,1), (1,2,3), (3,2,1), (1,2,3,4), (4,3,2,1), (1,2,3,4,5), \\&&&(5,4,3,2,1), (1,2,3,4,5,6), (6,5,4,3,2,1), (1,2,3,4,5,6,7)\\ \cline{2-4} 
					& \multirow{2}*{Top-$k$ OPPs} & \multirow{2}*{11} & (2,1), (1,2), (3,2,1), (1,2,3), (4,3,2,1), (5,4,3,2,1), (1,2,3,4), \\&&& (6,5,4,3,2,1), (1,2,3,4,5), (7,6,5,4,3,2,1), (1,2,3,4,5,6) \\ \cline{2-4} 
					& \multirow{2}*{Shared OPPs}   &\multirow{2}*{10}& (1,2), (2,1), (1,2,3), (3,2,1), (1,2,3,4), (4,3,2,1), \\&&&(1,2,3,4,5), (5,4,3,2,1), (1,2,3,4,5,6), (6,5,4,3,2,1) \\ \hline

				\end{tabular}
				
			}
		}
	\end{threeparttable}
\end{table}

2. We adopt K-Means to cluster the Raw, OPPs, and OPRs data with parameter \textit{K} = 7. 

3. To evaluate the clustering performance, we select two criteria: Normalized Mutual Information (\textit{NMI}) {\cite{danon2005comp}} and Homogeneity (\textit{h}) {\cite{rosenberg2007vmea}}, which can be calculated according to Equations \ref{eq1} and \ref{eq2}, respectively.

%\end{enumerate}

%\begin{tiny}
\begin{equation}\label{eq1}
NMI(X,Y)=\frac{\sum_{i=1}^{|X|} \sum_{j=1}^{|Y|} P(i, j) \log \left(\frac{P(i, j)}{P(i) P(j)}\right)}{\sqrt{\sum_{i=1}^{|X|} P(i) \log P(i) \times \sum_{j=1}^{|Y|} P(j) \log P(j)}}	
\end{equation}
%\end {tiny}

\begin{equation}\label{eq2}
h(X, Y)=1-\frac{-\sum_{i=1}^{|X|} \sum_{j=1}^{|Y|} P(i, j) \log P(i \mid j)}{-\sum_{i=1}^{|X|} P(i) \log P(i)} 
\end{equation}

Both \textit{NMI} and \textit{h} reflect the similarity between the clustering results and the actual values. {The greater the \textit{NMI} and \textit{h}, the greater the similarity, i.e., the better the clustering performance.} The comparison of clustering performances is shown in Fig. \ref{figure10}.

\begin{figure}[!htb]
	\centering
	\includegraphics[width=\linewidth]{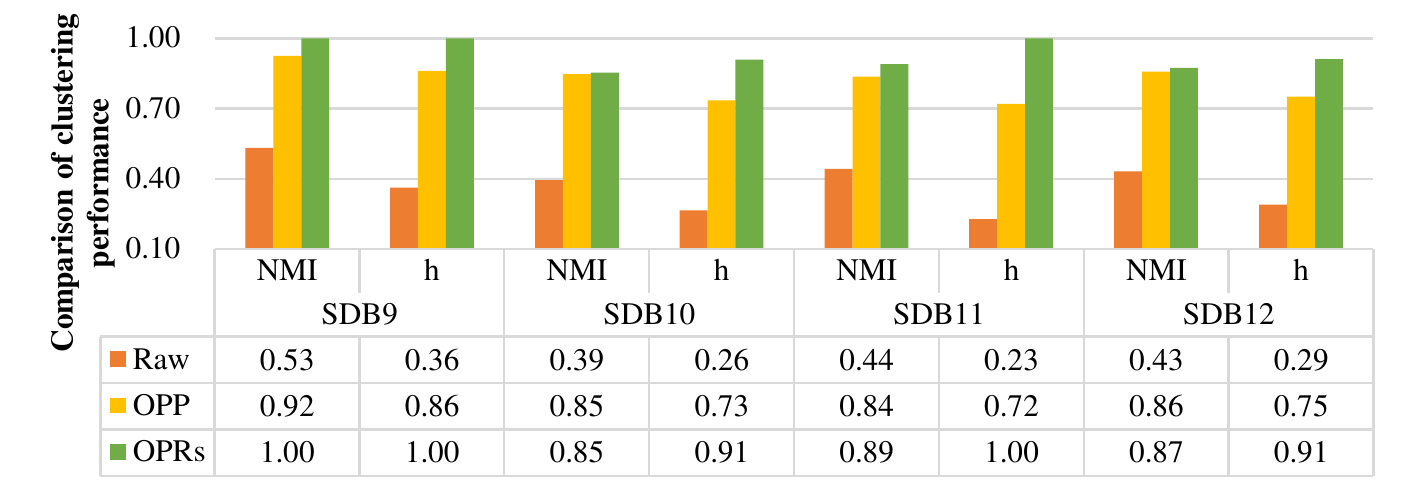}
	\caption{Comparison of clustering performances on SDB9–SDB12}
	\label{figure10}		
\end{figure}

The results give rise to the following observations.

1.  Both OPP-Miner and OPR-Miner can effectively extract the critical information from the original time series. For example, the values of \textit{NMI} of Raw, OPPs, and OPRs on SDB9 are 0.57, 0.85, and 0.89, respectively. The performances of OPP-Miner and OPR-Miner are better than Raw. The same effect can be found on all other datasets. The reason is that the original data may contain much redundant information, which can affect the clustering performance, while both OPP-Miner and OPR-Mine use the frequent trends to represent the original time series, which are more critical information with high support and high confidence. The results indicate that OPP-Miner and OPR-Miner can be used for feature selection for clustering task.

2. OPR-Miner has better performance than OPP-Miner. For example, the values of \textit{NMI} of OPPs and OPRs on SDB10 are 0.86 and 0.88, respectively. The same effect can be found on all other datasets except SDB9 for \textit{h}. The reason is that although top-\textit{k} OPPs are very critical information with high supports, some OPPs have lower confidence. However, OPR-Miner can extract the critical information with high support and high confidence, which can improve the clustering performance.

3.  { It is a very interesting phenomenon that some datasets share many common OPPs, while others share few. For example, on SDB9, OPR-Miner discovers six strong OPRs which are composed of eight patterns, and among them, seven patterns are Top-8 OPPs. But on SDB11, OPR-Miner discovers eight strong OPRs which are composed of 12 patterns, and among them, only five patterns are Top-12 OPPs. This result indicates that there is no clear relationship between top-$k$ OPPs and strong OPRs. For a specific time series clustering problem, how to extract effective features to achieve high-quality clustering performance is worthy of further study.}

\subsubsection{Classification performance}

To validate the classification performance, a classification experiment is conducted in this section. We conducted the experiment on SDB13-SDB15. We chose five classical classification algorithms: SVM with Polynomial kernel function, C4.5, CART, AdaBoost, and KNN, which are Top 10 algorithms in data mining {\cite{wu2008top1}}.

To mine OPRs, the parameters are \textit{minsup} = 15 and \textit{minconf} = 0.25. We discover 7 OPRs corresponding to 12 OPPs on SDB13, 5 OPRs corresponding to 10 OPPs on SDB14, and 7 OPRs corresponding to 10 OPPs on SDB15. Since the three datasets are binary classification datasets, we adopt the prediction accuracy as the criterion. Moreover, we employ three-fold cross-validation to verify the classification performance. The comparisons of accuracy on SDB13, SDB14, and SDB15 are shown in Figs. \ref{figure11}, \ref{figure12}, and \ref{figure13}, respectively.

\begin{figure}[!htb]
	\centering
	\includegraphics[width=\linewidth]{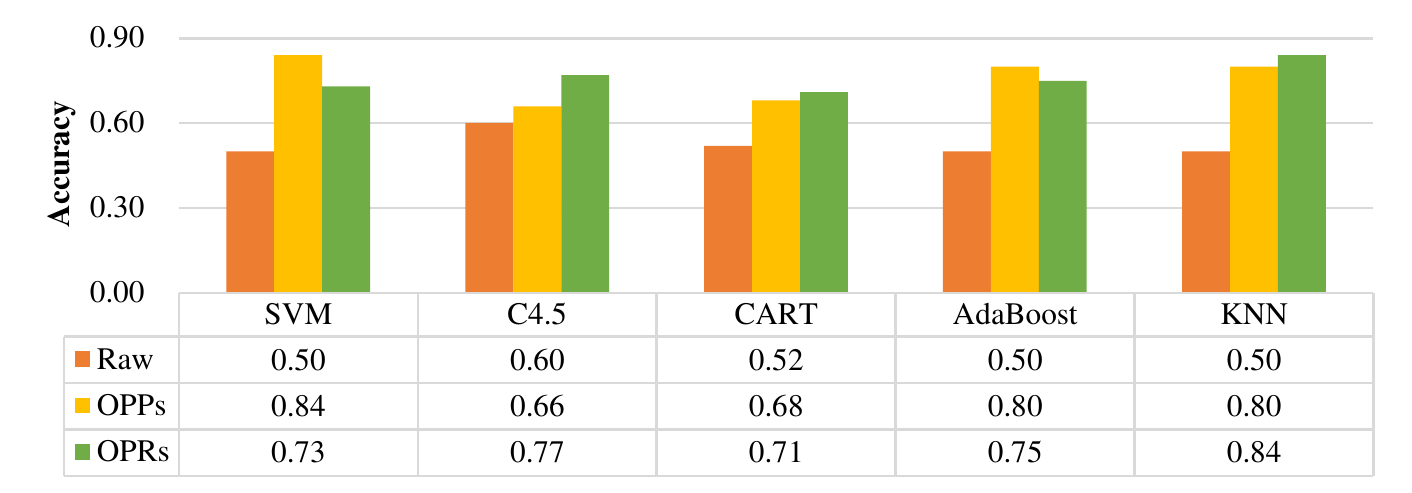}
	\caption{Comparison of accuracy on SDB13}
	\label{figure11}		
\end{figure}
\begin{figure}[!htb]
	\centering
	\includegraphics[width=\linewidth]{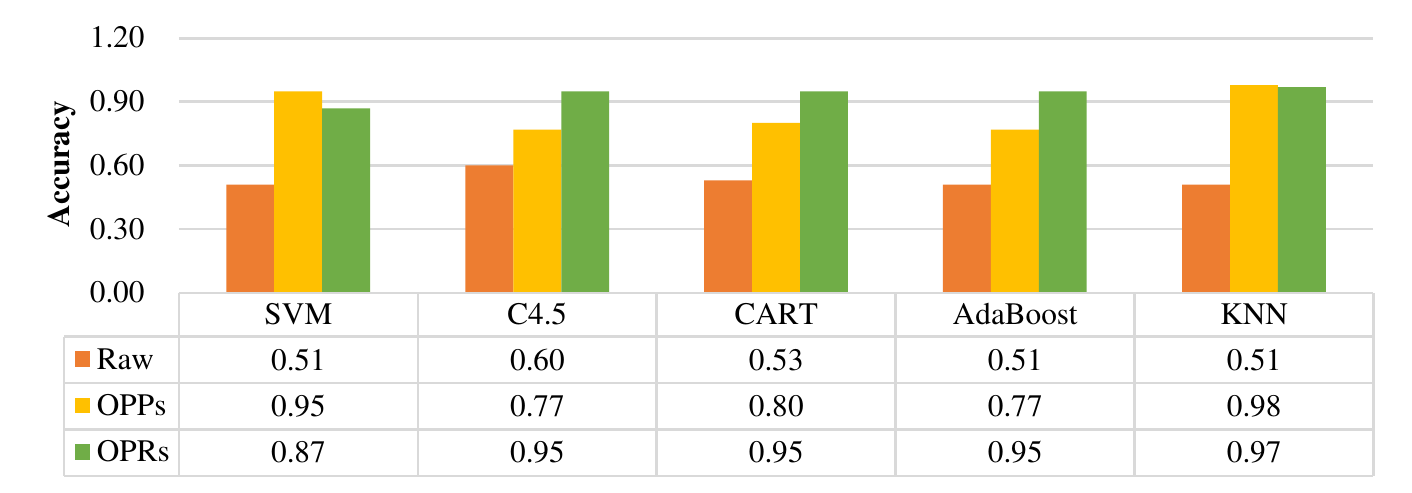}
	\caption{Comparison of accuracy on SDB14}
	\label{figure12}		
\end{figure}
\begin{figure}[!htb]
	\centering
	\includegraphics[width=\linewidth]{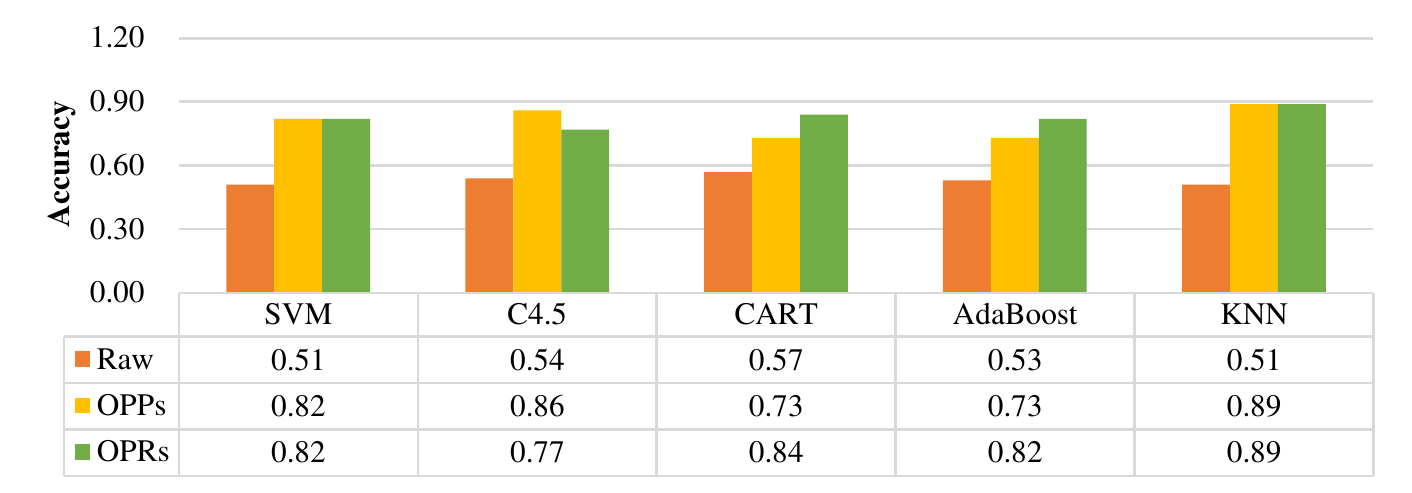}
	\caption{Comparison of accuracy on SDB15}
	\label{figure13}		
\end{figure}

From Figs. \ref{figure11}-\ref{figure13}, we observe that both OPP-Miner and OPR-Miner can effectively improve the classification performance. For example, in Fig. \ref{figure11}, if we use C4.5 as the classifier, the accuracy of the original data on SDB13 is 0.60, while those of OPPs and OPRs are 0.66 and 0.77, respectively. The classification performance is significantly improved. This effect can be found on all the other datasets. Moreover, OPR-Miner has better classification performance than OPP-Miner. The reason is the same as that in clustering experiments.

\section{Conclusion}\label{section6}

To improve the efficiency of OPP mining and mine the implicit relations between OPPs, we have addressed the issue of OPR mining and proposed an effective mining algorithm called OPR-Miner. In this approach, the key step is finding frequent OPPs. To mine these frequent OPPs, we proposed an algorithm called EFO-Miner consisting of four parts. To reduce the number of candidate patterns, EFO-Miner adopts a pattern fusion strategy to generate candidate patterns. Moreover, to calculate the supports of super-patterns, EFO-Miner uses the matching results of sub-patterns based on the pattern fusion strategy. To improve the efficiency of support calculations, EFO-Miner employs a screening strategy to dynamically reduce the size of the matching results for sub-patterns. To avoid useless support calculations, EFO-Miner applies a pruning strategy to dynamically prune the sub-patterns for which the size of the matching results is less than the minimum support threshold. Experimental results from weather, oil, and stock datasets verify that OPR-Miner gives better performance than other competitive algorithms. More importantly, clustering and classification experiments validate that OPR-Miner can be used to realize feature extraction and achieve good performance.

\section*{Acknowledgement}
This work was partly supported by National Natural Science Foundation of China (61976240, 52077056, 62120106008),  National Key Research and Development Program of China (2016YFB1000901), and Natural Science Foundation of Hebei Province, China (Nos. F2020202013, E2020202033).

{}

\begin{thebibliography}{1}
\bibitem{Dai2022tcyb}
C. Dai, J. Wu, D. Pi, S. I. Becker, L. Cui, Q. Zhang, and B. Johnson, ``Brain EEG time-series clustering using maximum-weight clique,'' \textit {IEEE Trans. Cybern.}, vol. 52, no. 1, pp. 357-371, 2022.

\bibitem {li2021tkde}
Q. Li, J. Tan, J. Wang, and H. Chen, ``A multimodal event-driven LSTM model for stock prediction using online news,'' \textit {IEEE Trans. Knowl. Data Eng.}, vol. 33, no. 10, pp. 3323-3337, 2021.

\bibitem{karevan2020}
Z. Karevan and J. A.K. Suykens, ``Transductive LSTM for time-series prediction: An application to weather forecasting,'' \textit {Neural Networks}, vol. 125, pp. 1-9, 2020.

\bibitem {wukeogh}
R. Wu and E. Keogh, ``Current time series anomaly detection benchmarks are flawed and are creating the illusion of progress,'' \textit{IEEE Trans. Knowl. Data Eng.}, DOI: 10.1109/TKDE.2021.3112126, 2021.


\bibitem{2021tkde}
R. Rezvani, P. M. Barnaghi, and S. Enshaeifar, ``A new pattern representation method for time-series data,'' \textit{IEEE Trans. Knowl. Data Eng.}, vol. 33, no. 7, pp. 2818-2832, 2021.


\bibitem{wu2021tmis}
Y. Wu, X. Wang, Y. Li, L. Guo, Z. Li, J. Zhang, and X. Wu, ``OWSP-Miner: Self-adaptive one-off weak-gap strong pattern mining,'' \textit{ACM Trans. Manag. Inf. Syst.}, vol. 13, no. 3, pp. 25, 2022.



\bibitem {kdd1996}
H. Mannila and H. Toivonen, ``Discovering generalized episodes using minimal occurrences,'' \textit {KDD}, vol. 96, pp. 146-151, 1996.

\bibitem{philippe2019}
P. Fournier-Viger, P. Yang, J. C. W. Lin, and U. Yun, ``HUE-Span: Fast high utility episode mining,'' \textit {International Conference on Advanced Data Mining and Applications}, pp. 169-184, 2019.


\bibitem{aotkde2017}
X. Ao, P. Luo, J. Wang, F. Zhuang, and Q. He, ``Mining precise-positioning episode rules from event sequences,'' \textit {IEEE Trans. Knowl. Data Eng.}, vol. 30, no. 3, pp. 530-543, 2017.


\bibitem{icdm2021}
Y. Chen, P. Fournier-Viger, F. Nouioua, and Y. Wu, ``Sequence prediction using partially-ordered episode rules,'' \textit{ 2021 International Conference on Data Mining Workshops (ICDMW)}, pp. 574-580, 2021.





\bibitem{PAA}
E. Keogh, K. Chakrabarti, and M. Pazzani. ``Locally adaptive dimensionality reduction for indexing large time series databases,'' \textit{ACM SIGMOD Conference on Management of Data}, pp. 151-162,  2001.


\bibitem{SAX}
J. Lin, E. Keogh, L. Wei, and S. Lonardi. ``Experiencing SAX: a novel symbolic representation of time series,'' \textit{Data Min. Knowl. Discov.}, vol. 15, no.2, pp. 107-144,  2007.




\bibitem {li2021apind}
Y. Li, L. Yu, J. Liu, L. Guo, Y. Wu, and Xindong Wu, ``NetDPO: (delta, gamma)-approximate pattern matching with gap constraints under one-off condition,'' \textit{Appl. Intell.}, vol. 52, no. 11, pp. 12155–12174, 2022.

\bibitem{wu2020netd}
Y. Wu, J. Fan, Y. Li, L. Guo, and X. Wu, ``NetDAP: (delta, gamma)-Approximate pattern matching with length constraints,'' \textit{Appl. Intell.}, vol. 50, no. 11, pp. 4094-4116, 2020.




\bibitem{wu2021insweak}
Y. Wu, Z. Yuan, Y. Li, L. Guo, P. Fournier-Viger, and Xindong Wu, ``NWP-Miner: Nonoverlapping weak-gap sequential pattern mining,'' \textit{Inf. Sci.}, vol. 588, pp. 124-141, 2022.


\bibitem{Min2020ins} 
F. Min, Z. Zhang, W. Zhai, and R. Shen,  ``Frequent pattern discovery with tri-partition alphabets,'' \textit{Inf. Sci.}, vol. 507, pp. 715-732, 2020.


\bibitem{wu2021ntpm}
Y. Wu, L. Luo, Y. Li, L. Guo, P. Fournier-Viger, X. Zhu, and X. Wu, ``NTP-Miner: Nonoverlapping three-way sequential pattern mining,'' \textit{ACM Trans. Knowl. Discov. Data}, vol. 16, no. 3, pp. 51, 2022.



\bibitem{kim2014orde}
J. Kim, P. Eades, R. Fleischer, S. Hong, C. S. Iliopoulos, K. Park, S. J. Puglisi, and T. Tokuyama, ``Order-preserving matching,'' \textit{Theor. Comput. Sci.}, vol. 525, pp. 68-79, 2014.

\bibitem{cho2013fast}
S. Cho, J. C. Na, K. Park, and J. S. Sim, ``A fast algorithm for order-preserving pattern matching,'' \textit{Inf. Process. Lett.}, vol. 115, no. 2, pp. 397-402, 2015.






\bibitem{ipl2013}
M. Kubica, T. Kulczynski, J. Radoszewski, W. Rytter, and T. Walen, ''A linear time algorithm for consecutive permutation pattern matching,'' \textit{Inf. Process. Lett.}, vol. 113, no. 12, pp. 430-433, 2013.

\bibitem{wu2021orde}
Y. Wu, Q. Hu, Y. Li, L. Guo, X. Zhu, and X. Wu, ``OPP-Miner: Order-preserving sequential pattern mining for time series,'' \textit{IEEE Trans. Cybern.}, DOI: 10.1109/TCYB.2022.3169327,  2022.

\bibitem{fournierviger2014spmf}
P. Fournier-Viger, A. Gomariz, T. Gueniche, A. Soltani, C. Wu, and V. S. Tseng, ``SPMF: A java open-source pattern mining library,'' \textit{J. Mach. Learn. Res.}, vol. 15, no. 1, pp. 3389-3393, 2014.




\bibitem{Okolica2020tkde}
J. S. Okolica, G. L. Peterson, R. F. Mills, and M. R. Grimaila, ``Sequence pattern mining with variables,'' \textit{IEEE Trans. Knowl. Data Eng.}, vol. 32, no. 1, pp. 177-187, 2020.


.

\bibitem{wuapin2014}
Y. Wu, L. Wang, J. Ren, W. Ding, and X. Wu, ``Mining sequential patterns with periodic wildcard gaps,'' \textit {Appl. Intell.},  vol. 41, no. 1, pp. 99-116, 2014.





\bibitem{dong2019mini}
X. Dong, P. Qiu, J. Lu, L. Cao, and T. Xu, ``Mining Top-k useful negative sequential patterns via learning,'' \textit{IEEE Trans. Neural Networks Learn. Syst.}, vol. 30, no. 9, pp. 2764-2778, 2019.

\bibitem{wunegativetkdd}
Y. Wu, M. Chen, Y. Li, J. Liu, Z Li, J Li, and X. Wu, ``ONP-Miner: One-off negative sequential pattern mining,'' \textit {ACM Trans. Knowl. Discov. Data}, doi: 10.1145/3549940, 2022.


\bibitem{gan2021tkde}
W. Gan, J. C. W. Lin, P. Fournier-Viger, H.-C. Chao, V. S. Tseng, P. S. Yu, ``A survey of utility-oriented pattern mining,'' \textit{IEEE Trans. Knowl. Data Eng.}, vol. 33, no. 4, pp. 1306-1327, 2021.


\bibitem{weisong2021}
W. Song, L. Liu, and C. Huang, ``Generalized maximal utility for mining high average-utility itemsets,'' \textit{Knowl. Inf. Syst.} vol. 63, pp. 2947-2967, 2021.

\bibitem{Truong2019}
T. Truong, H. V. Duong, B. Le, and P. Fournier-Viger, ``Efficient vertical mining of high average-utility itemsets based on novel upper-bounds,'' \textit{IEEE Trans. Knowl. Data Eng.}, vol. 31, no. 2, pp. 301-314, 2019.





\bibitem{Mannila1997}
H. Mannila, H. Toivonen, and A. Inkeri Verkamo, ``Discovery of frequent episodes in event sequences,'' \textit{Data Min. Knowl. Discov.}, vol. 1, no. 3, pp. 259-289, 1997.

\bibitem{aoicde2015}
X. Ao, P. Luo, C. Li, F. Zhuang, and Q. He, ``Online frequent episode mining,'' \textit {IEEE 31st International Conference on Data Engineering}, pp. 891-902, 2015.


\bibitem{ghosh2017sept}
S. Ghosh, J. Li, L. Cao, and K. Ramamohanarao. Hsu, ``Septic shock prediction for ICU patients via coupled HMM walking on sequential contrast patterns,'' \textit{J. Biomed. Informatics}, vol. 66, pp. 19-31, 2017.


\bibitem{li2021apinm}
Y. Li, S. Zhang, L. Guo, J. Liu, Y. Wu, and X. Wu, ``NetNMSP: Nonoverlapping maximal sequential pattern mining,'' \textit{Appl. Intell.}, vol. 52, no. 9,  pp. 9861-9884, 2022.


\bibitem{nish2018alat}
N. Nishimura, N. Sukegawa, Y. Takano, and J. Iwanaga, ``A latent-class model for estimating product-choice probabilities from clickstream data,'' \textit{Inf. Sci.}, vol. 429, pp. 406-420, 2018.

\bibitem{duan2020tkdd}
T. Wang, L. Duan, G. Dong, and Z. Bao, ``Efficient mining of outlying sequence patterns for analyzing outlierness of sequence data,'' \textit{ACM Trans. Knowl. Discov. Data }, vol. 14, no. 5, pp.  62, 2020.



\bibitem{wu2019mini}
Y. Wu, Y. Wang, Y. Li, X. Zhu, and X. Wu, ``Top-k self-adaptive contrast sequential pattern mining,'' \textit{IEEE Trans. Cybern.},  DOI: 10.1109/TCYB.2021.3082114, 2021.


\bibitem{smedt2020tkde}
J. D. Smedt, G. Deeva, and J. D. Weerdt, ``Mining behavioral sequence constraints for classification,'' \textit{IEEE Trans. Knowl. Data Eng.}, vol. 32, no. 6, pp. 1130-1142, 2020.





\bibitem{wu2018nose}
Y. Wu, Y. Tong, X. Zhu, and X. Wu, ``NOSEP: Nonoverlapping sequence pattern mining with gap constraints,'' \textit{IEEE Trans. Cybern.}, vol. 48, no. 10, pp. 2809-2822, 2018.


\bibitem{zhang2017onap}
P. Zhang, and A. Mikhail, ``On approximate pattern matching with thresholds,'' \textit{Inf. Process. Lett.}, vol. 123, no. 7, pp. 21-26, 2017.

\bibitem{fournierviger2012cmru}
P. Fournier-Viger, U. Faghihi, R. Nkambou, and E. M. Nguifo, ``CMRules: Mining sequential rules common to several sequences,'' \textit{Knowl. Based Syst.}, vol. 25, no. 1, pp. 63-76, 2012.

\bibitem{pham2014anef}
T. Pham, J. Luo, T. Hong, and B. Vo, ``An efficient method for mining non-redundant sequential rules using attributed prefix-trees,'' \textit{Eng. Appl. Artif. Intell.}, vol. 32, pp. 88-99, 2014.




\bibitem{fournierviger2015tkde}
P. Fournier-Viger, C.-W. Wu, V.S. Tseng, L. Cao, R. Nkambou, ``Mining partially-ordered sequential rules common to multiple sequences,'' \textit{IEEE Trans. Knowl. Data Eng.}, vol. 27, no. 8, pp. 2203-2216, 2015.






\bibitem{Chhabra2016ipl}
T. Chhabra, and J. Tarhio. ``A filtration method for order-preserving matching,'' \textit{Inf. Process. Lett.}, vol. 116 no. 2, pp. 71-74, 2016.





\bibitem{danon2005comp}
L. Danon, A. D{\'i}az-Guilera, J. Duch, and A. Arenas, ``Comparing community structure identification,'' \textit{Journal of Statistical Mechanics: Theory and Experiment}, vol. 2005, pp. 09008, 2005.


\bibitem{rosenberg2007vmea}
A. Rosenberg, and J. Hirschberg, ``V-Measure: A conditional entropy-based external cluster evaluation measure,'' in \textit{Proceedings of Empirical Methods in Natural Language Processing and Computational Natural Language Learning}, 2007, pp. 410-420.


\bibitem{wu2008top1}
X. Wu, V. Kumar, J. R. Quinlan, J. H. Motoda, G. J. McLachlan, A. F. M. Ng, B. Liu, P. S. Yu, Z. Zhou, M. S. Steinbach, D. J. Hand, and D. Steinberg, ``Top 10 algorithms in data mining,'' \textit{Knowl. Inf. Syst.}, vol. 14, no. 1, pp. 1-37, 2008.



\end{thebibliography}
\end{document}